\title{Improved ESP-index: a practical self-index for highly repetitive texts \thanks{This work was supported by JSPS KAKENHI(24700140,23680016) and the JST PRESTO program.}}
\author{Yoshimasa Takabatake\inst{1} 
  \and Yasuo Tabei\inst{2}
  \and Hiroshi Sakamoto\inst{1}}
\institute{
  Kyushu Institute of Technology~\email{\{takabatake,hiroshi\}@donald.ai.kyutech.ac.jp}
  \and 
  PRESTO, Japan Science and Technology Agency~\email{tabei.y.aa@m.titech.ac.jp}
}
\newcommand{\logstar}{\lg^*\hspace{-.9mm}}% by sakamoto
\begin{document}

\maketitle

\begin{abstract}
While several self-indexes for highly repetitive texts exist, developing a practical self-index applicable to real world repetitive texts remains a challenge. 
ESP-index is a grammar-based self-index on the notion of edit-sensitive parsing (ESP), 
an efficient parsing algorithm that guarantees upper bounds of parsing discrepancies between different appearances of the same subtexts in a text.
Although ESP-index performs efficient top-down searches of query texts, it has a serious issue on binary searches for finding appearances of variables for a query text, 
which resulted in slowing down the query searches. 
We present an improved ESP-index (ESP-index-I) by leveraging the idea behind succinct data structures for large alphabets. 
While ESP-index-I keeps the same types of efficiencies as ESP-index about the top-down searches, it avoid the binary searches using fast rank/select operations.
We experimentally test ESP-index-I on the ability to search query texts and extract subtexts from real world repetitive texts on a large-scale, and 
we show that ESP-index-I performs better that other possible approaches. 
\end{abstract}

\section{Introduction}
Recently, highly repetitive text collections have become common. 
Examples are human genomes, version controlled documents and source codes in repositories. 
In particular, the current sequencing technology enables us to sequence individual genomes in a short time,
resulting in generating a large amount of genomes, perhaps millions of genomes in the near future. 
There is therefore a strong demand for developing powerful methods to store and process such repetitive texts on a large-scale. 

Grammar compression is effective for compressing and processing repetitive texts, and 
it builds a {\em context free grammar}~(CFG) that generates a single text. 
There are two types of problems: (i) building as small as possible of a CFG generating an input text and
(ii) representing the obtained CFG as compactly as possible for various applications. 
Several methods  have been presented for type~(i).
Representative methods are RePair~\cite{Larsson2000} and LCA~\cite{Sakamoto09}. 
Methods for type~(ii) have also been presented for processing repetitive texts, e.g., pattern matching~\cite{Yamamoto2011}, pattern mining~\cite{Goto13} and edit distance computation~\cite{Hermelin09}. 

Self-indexes aim at representing a collection of texts in a compressed format that supports extracting subtexts of arbitrary positions and also provides query searches on the collection, 
and are fundamental in modern information retrieval.
However, developing a grammar-based self-index remains a challenge, 
since a grammar-compressed text forms a tree structure named parse tree and 
variables attached to its nodes do not necessarily encode all portions of a text, which makes the problem of searching query texts from grammar compressed texts difficult. 
Claude et al.~\cite{Claude2010,Claude12} presented a grammar-based self-index named {\em SLP-index}.
SLP-index uses two step approaches: (i) it finds variables called first occurrences that encode all prefixes and suffixes of a query text 
by binary searches on compactly encoded CFGs and (ii) then it discovers the remaining occurrences of the variables. 
However, finding first occurrences for moderately long queries is computationally demanding because the method needs to perform binary searches as many times as the query length, which resulted in reducing the practical usage of their method. 

{\em Edit-sensitive parsing} (ESP)~\cite{Cormode07} is an efficient parsing algorithm developed for approximately computing edit distances with moves between texts.
ESP builds from a given text a parse tree that guarantees upper bounds of parsing discrepancies between different appearances of the same subtext. 
Maruyama et al.~\cite{ESP} presented another grammar-based self index called {\em ESP-index} on the notion of ESP. 
ESP-index represents a parse tree as a {\em directed acyclic graph}~(DAG) and then encodes the DAG into 
succinct data structures for ordered trees and permutations. 
%Unlike SLP-index, ESP-index avoids performing binaries searches for finding the first occurrences of variables.
Unlike SLP-index, it performs top-down searches for finding candidates of appearances of a query text 
on the data structure by leveraging the upper bounds of parsing discrepancies in ESP. 
However, it has a serious issue on binary searches for finding appearances of variables.

\begin{table}[bt]
\begin{center}
{\footnotesize
  \caption{Comparison with existing methods. Searching time and extraction time is presented in big $O$ notation that is omitted for space limitations. $u$ is text length, $m$ is the length of a query text, $n$ is the number of variables in a grammar, $\sigma$ is alphabet size, $z$ is the number of pharases in LZ77, $d$ is the length of nesting in LZ77, 
  $occ$ is the number of occurrences of query text in a text, $occ_c$ is the number of candidate apperances of queries, $\lg^*$ is the iterated logarithm and $\epsilon$ is a real value in $(0,1)$. $\lg$ stands for $\log_2$.}
}
\vspace{-0.2cm}  
\label{tbl:comp}
{\footnotesize
\begin{tabular}{l|c|c|c}
                  & Space (bits)  & Searching time & Extraction time \\  
\hline
LZ-index~\cite{Navjda03}   & $z\lg{u}+5z\lg{\sigma}$ & $m^2d+(m+occ)\lg{z}$ & $md$ \\
                  & $-z\lg{z}+o(u)+O(Z)$ &  & \\
\hline
Gagie et al.~\cite{Gagie12} & $2n\lg{n}+O(z\lg{u}$ & $m^2+(m+occ)\lg\lg{u}$ & $m+\lg\lg{u}$ \\
                            & $+z\lg{z}\lg\lg{z})$ &                         &  \\
\hline
SLP-index~\cite{Claude2010,Claude12}& $n\lg{u}+O(n\lg{n})$ & $(m^2+h(m+occ))\lg{n}$ & $(m+h)\lg{n}$ \\
\hline
ESP-index~\cite{ESP} & $n\lg{u}+(1+\epsilon)n\lg{n}$ & $(1/\epsilon)(m\lg{n}$ & $(1/\epsilon)(m+\lg{u})$ \\
                     & $+4n+o(n)$    & $+occ_c\lg{m}\lg{u})\lg^*{u}$         & \\
\hline\hline 
ESP-index-I & $n\lg{u}+n\lg{n}$ & $(\lg{\lg{n}})(m$ & $(\lg\lg{n})(m+\lg{u})$ \\
            & $+2n+o(n\lg{n})$   & $+occ_c\lg{m}\lg{u})\lg^*{u}$        &                    \\
\end{tabular}
}
\end{center}
\end{table}

In this paper, we present an {\em improved ESP-index} (ESP-index-I) for fast query searches.  
Our main contribution is to develop a novel data structure for encoding a parse tree built by ESP. 
Instead of encoding the DAG into two ordered trees using succinct data structures in ESP-index, 
ESP-index-I encodes it into a bit string and an integer array by leveraging the idea behind rank/select dictionaries for large alphabets~\cite{Golynski06}.
Instead of performing binary searches for finding variables on data structures in SLP-index and ESP-index, 
ESP-index-I computes fast select queries in $O(1)$ time, resulting in faster query searches. 
Our results and those of existing algorithms are summarized in Table~\ref{tbl:comp}.

Experiments were performed on retrieving query texts from real-world large-scale texts. 
The performance comparison with other algorithms demonstrates ESP-index-I's superiority.

\section{Preliminaries}
The length of string $S$ is denoted by $|S|$, and the cardinality of a set $C$ is similarly denoted by $|C|$.
The set of all strings over the alphabet $\Sigma$ is denoted by $\Sigma^*$, and let $\Sigma^i = \{w\in\Sigma^*\mid |w|=i\}$.
We assume a recursively enumerable set ${\cal X}$ of variables with $\Sigma \cap {\cal X} = \emptyset$. 
The expression $a^+$ $(a\in \Sigma)$ denotes the set $\{a^k\mid k\geq 1\}$, and
string $a^k$ is called a {\em repetition} if $k\geq 2$.
Strings $x$ and $z$ are said to be a prefix and suffix of $S=xyz$, respectively.
In addition, $x,y,z$ are called substrings of $S$.
$S[i]$ and $S[i,j]$ denote the $i$-th symbol of string $S$ and
the substring from $S[i]$ to $S[j]$, respectively.
$\lg$ stands for $\log_2$.
We let $\lg^{(1)}u = \lg u$, $\lg^{(i+1)}u = \lg \lg^{(i)}u$, and
$\logstar u = \min\{i\mid \lg^{(i)}u\leq 1\}$.
In practice, we can consider $\logstar u$ to be constant, 
since $\logstar u\leq 5$ for $u\leq 2^{65536}$.

\subsection{Grammar compression}
A CFG is a quadruple $G=(\Sigma, V, D, X_s)$ where $V$ is a finite subset of $\cal X$, 
$D$ is a finite subset of $V \times (V \cup \Sigma)^*$ of production rules, 
and $X_s \in V$ represents the start symbol.
Variables in $V$ are called nonterminals. 
We assume a total order over $\Sigma \cup V$. 
The set of strings in $\Sigma^*$ derived from $X_s$ by $G$ is denoted by $L(G)$. 
A CFG $G$ is called {\em admissible} if for any $X \in {\cal X}$ there is exactly one production rule $X\to \gamma \in D$ and $|L(G)| = 1$.
%An admissible $G$ deriving a text $S$ is called a grammar compression of $S$ for any $X \in V$. 
An admissible $G$ deriving a text $S$ is called a grammar compression of $S$. 
The size of $G$ is the total of the lengths of strings on the right hand sides of all production rules; it is denoted by $|G|$.
The problem of grammar compression is formalized as follows:
\begin{definition}[Grammar Compression] \label{def:gc}
Given a string $w \in \Sigma^*$, compute a small, admissible $G$ that derives only $w$.
\end{definition}

%A CFG is a quadruple $G=(\Sigma,V,D,X_s)$, where
%$V$ is a finite set of variables such that $\Sigma\cap V=\emptyset$,
%$D$ is a finite subset of $V\times(V\cup\Sigma)^*$ of production rules,
%and $X_s\in V$ is the start symbol.
%Variables in $V$ are also called nonterminals.
%We assume any CFG $G$ is {\em admissible}, i.e., $G$ derives just one string and
%for each variable $X$ exactly one production rule $X\to \alpha$ exists.
%We also assume for any $\alpha\in(\Sigma\cup V)^*$ at most one $X\to\alpha\in D$ exists.
%An admissible $G$ deriving a string $w\in\Sigma^*$ is called grammar compression of $w$.
%The size of $G$ is denoted by $|G|$, the total length of strings on the right hand sides of all production rules.
%\begin{definition}\label{grammar-compression}
%Given a string $w\in \Sigma^*$, compute a small, admissible $G$ that derives only $w$.
%\end{definition}

$S(D) \in \Sigma^*$ denotes the string derived by $D$ from a string $S\in(\Sigma\cup V)^*$.
For example, when $S=aYY$, $D=\{X\to bc,Y\to Xa\}$ and $\Sigma=\{a,b,c\}$,
we obtain $S(D)=abcabca$.
$|X|$, also denoted by $|X(D)|$, represents the length of the string derived by $D$ from $X\in V$.

We assume any production rule $X\to\gamma$ satisfies $|\gamma| = 2$
because any grammar compression $G$ can be transformed into $G^\prime$
satisfying $|G^\prime|\leq 2|G|$.

The parse tree of $G$ is represented by a rooted ordered binary tree such that
internal nodes are labeled by variables, and the yields, i.e.,
the sequence of labels of leaves is equal to $w$.
In a parse tree, any internal node $Z\in V$ corresponds to the production rule $Z\to XY$, and it has a left child labled by $X$ 
and a right child labeled by $Y$. 
The height of a tree is the length of the longest one among paths from the root to leaves.

\subsection{Phrase and reverse dictionaries}
A phrase dictionary is a data structure for directly accessing a digram $X_iX_j$ from a given $X_k$ if $X_k\to X_iX_j\in D$.
It is typically implemented by an array requiring $2n\log n$ bits for storing $n$ production rules.
In this paper, $D$ also represents its phrase dictionary.
A reverse dictionary $D^{-1}: (\Sigma \cup {\cal X})^2 \to {\cal X}$ is a mapping from a given digram to a nonterminal symbol. 
$D^{-1}$ returns a nonterminal $Z$ associated with a digram $XY$ if $Z\to XY \in D$; otherwise, it creates a new nonterminal symbol $Z' \notin V$ and returns $Z'$. 
For example, if we have a phrase $D=\{X_1\to ab, X_2\to cd\}$, then $D^{-1}(a,b)$ returns $X_1$, while $D^{-1}(b,c)$ creates a new nonterminal $X_3$ and returns it. 

\subsection{Rank/select dictionaries}
Our method represents CFGs using a rank/select dictionary, a succinct data structure for a bit string $B$~\cite{Jacobson89} 
supporting the following queries: $\mbox{rank}_c(B,i)$ returns the number of occurrences of $c \in \{0,1\}$ in 
$B[0,i]$; $\mbox{select}_c(B,i)$ returns the position of the $i$-th occurrence of $c\in\{0,1\}$ in $B$; 
$\mbox{access}(B,i)$ returns $i$-th bit in $B$.
Data structures with only the $|B| + o(|B|)$ bit storage to achieve $O(1)$ time rank and select queries~\cite{Raman07} have been presented. 

GMR~\cite{Golynski06} is a rank/select dictionary for large alphabets and supports 
rank/ select/access queries for general alphabet strings $S \in \Sigma^*$. 
GMR uses $n\log{n}+o(n\log{n})$ bits while computing both rank and access queries in $O(\log{\log{|\Sigma|}})$ times and 
also computing select queries in $O(1)$ time. 
Space-efficient implementations of GMR are also presented in \cite{Barbay2013}.

\section{ESP-index}

\subsection{Edit-sensitive parsing (ESP)}
In this section, we review a grammar compression based on ESP~\cite{Cormode07},
which is referred to as {\em GC-ESP}. 
The basic idea of GC-ESP is to (i) start from an input string $S\in\Sigma^*$, 
(ii) replace as many as possible of the same digrams in common substrings by the same variables, 
and (iii) iterate this process in a bottom-up manner until $S$ is transformed to a single variable.
%GC-ESP builds from an input string of length $u$ a parse tree of height $h=O(\lg u)$ in $O(u\lg^{*}{u})$ time. 

In each iteration, 
GC-ESP uniquely divides $S$ into maximal non-overlapping substrings such that 
$S=S_1S_2\cdots S_\ell$ and each $S_i$ is categorized into one of three types: 
(1) a repetition of a symbol;
(2) a substring not including a type1 substring and of length at least $\lg^* |S|$;
(3) a substring being neither type1 nor type2 substrings. 
%These non-overlapping substrings are uniquely determined in a string at each level of a parse tree. 

At one iteration of parsing $S_i$, GC-ESP builds two kinds of subtrees from strings $XY$ and $XYZ$ of length two and three, respectively.
The first type is a $2$-tree corresponding to a production rule in the form of $A\to XY$. 
The second type is a $2$-$2$-tree corresponding to production rules in the forms of $A\to XB$ and $B\to YZ$. 

GC-ESP parses $S_i$ according to its type. 
In case $S_i$ is a type1 or type3 substring, GC-ESP performs the typical left aligned parsing 
where $2$-trees are built from left to right in $S_i$
and a 2-2-tree is built for the last three symbols if $|S_i|$ is odd, as follows:
\begin{itemize} 
  \item If $|S_i|$ is even, GC-ESP builds $A\to S_i[2j-1,2j]$, $j=1,...,|S_i|/2$,
  \item Otherwise, it builds $A\to S_i[2j-1,2j]$ for $j=1,...,(\lfloor |S_i|/2 \rfloor-1)$, 
and builds $A\to B S_i[2j+1]$ and $B\to S_i[2j-1,2j]$ for $j=\lfloor|S_i|/2\rfloor$.
\end{itemize}
In case $S_i$ is a type2 substring, GC-ESP further partitions $S_i$ into several substrings such that 
$S_i=s_1s_2...s_\ell$ ($2\leq |s_j|\leq 3$) using {\em alphabet reduction}~\cite{Cormode07}, which is detailed below.
GC-ESP builds $A\to s_j$ if $|s_j|=2$ or builds $A\to s_j[2,3]$, $B\to s_j[1]A$ otherwise for $j=1,...,\ell$.

GC-ESP transforms $S_i$ to $S'_i$ and parses the concatenated string $S'_i$ $(i=1,\ldots,\ell)$ at the next level of a parse tree (Figure~\ref{fig:esp}). 
In addition, GC-ESP gradually builds a phrase dictionary $D^k$ at $k$th level of a parse tree.
The final dictionary $D$ is the union of dictionaries built at each level of a parse tree, i.e., $D=D^1 \cup D^2 \cup ... \cup D^h$.

\begin{figure*}[t]
\begin{center}
\includegraphics[width=0.9\textwidth]{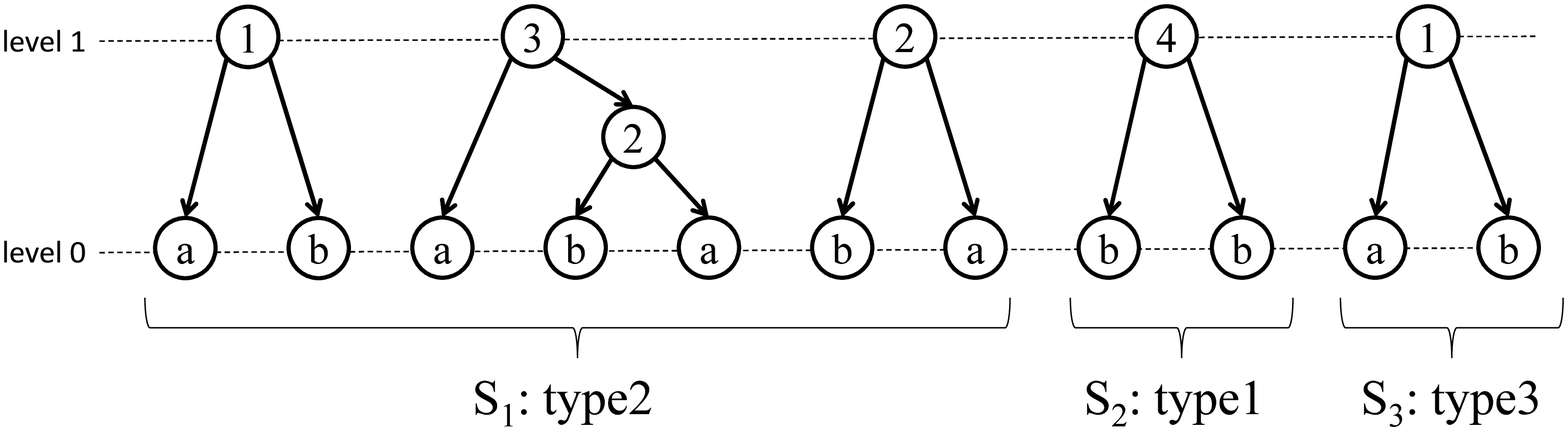}
\end{center}
\vspace{-0.7cm}
\caption{An example for parsing $S=ababababbab$ by GC-ESP.
There are three non-overlapping substrings $S_1=ababa$ of type2, $S_2=bb$ of type1, and $S_3=ab$ of type3.
They are individually parsed with a common reverse dictionary.
The resulted string is $13241$ that is parsed at one higher level.
}
\label{fig:esp}
\end{figure*}

{\bf Alphabet reduction:}
Alphabet reduction is a procedure for partitioning a string into substrings of length 2 and 3. 
Given a type2 substring $S$, consider $S[i]$ and $S[i-1]$ represented as binary integers.
Let $p$ be the position of the least significant bit in which $S[i]$ differs from $S[i-1]$, 
and let ${\it bit}(p,S[i])\in\{0,1\}$ be the value of $S[i]$ at the $p$-th position, where $p$ starts at $0$.
Then, $L[i]=2p+{\it bit}(p,S[i])$ is defined for any $i\geq 2$.
Since $S$ does not contain any repetitions as type2, the resulted string $L=L[2]L[3]\ldots L[|S|]$ does not also contain repetitions, i.e., $L$ is type2.
We note that if the number of different symbols in $S$ is $n$ which is denoted by $[S]=n$, clearly $[L]\leq 2\lg n$.
Setting $S:=L$, the next label string $L$ is iteratively computed until $[L]\leq \logstar |S|$.
At the final $L^*$, $S[i]$ of the original $S$ is called {\em landmark} if $L^*[i]>\max\{L^*[i-1],L^*[i+1]\}$.

After deciding all landmarks, if $S[i]$ is a landmark, we replace $S[i-1,i]$ by a variable $X$
and update the current dictionary with $X\to S[i-1,i]$.
After replacing all landmarks, the remaining maximal substrings 
are replaced by the left aligned parsing.

Because $L^*$ is type2 and $[L^*]\leq \logstar |S|$, 
any substring of $S$ longer than $2\logstar |S|$ must contain at least one landmark.
Thus, we have the following characteristic.

\begin{lemma}\label{lem1:cm}\rm(Cormode and Muthukrishnan~\cite{Cormode07})
Determining the closest landmark to $S[i]$ depends on only $\logstar |S|+5$
contiguous symbols to the left and 5 to the right.
\end{lemma}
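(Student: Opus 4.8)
The plan is to decompose the claim into a \emph{leftward-locality} property of the alphabet-reduction labels and a \emph{constant-density} property of the landmarks, and then to combine them. First I would show, by induction on the number of reduction rounds, that each final label $L^*[j]$ is a function of only the contiguous block $S[j-t,j]$ of original symbols, where $t$ is the number of rounds. The base case is immediate from the definition $L[j]=2p+\mathit{bit}(p,S[j])$, which inspects only $S[j-1]$ and $S[j]$; the inductive step composes this one-position leftward reach over successive rounds, so after $t$ rounds the dependency window extends exactly $t$ positions to the left and \emph{none} to the right. This asymmetry is the conceptual heart of the lemma: since $L[j]$ reads $S[j-1]$ but never $S[j+1]$, label information propagates strictly from left to right.

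Next I would bound $t$. Because $[L]\leq 2\lg[S]$ in each round and the iteration halts once $[L]\leq\logstar|S|$, the alphabet size collapses at the iterated-logarithm rate, giving $t\leq\logstar|S|+O(1)$. Combined with the locality property, every final label $L^*[j]$ then depends on $S[j-\logstar|S|-O(1),\,j]$, whose right endpoint is $j$ itself.

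The remaining ingredient is that the closest landmark to $S[i]$ lies within a constant distance $c_0$ of $i$. Landmarks are the local maxima $L^*[j]>\max\{L^*[j-1],L^*[j+1]\}$, and the parse blocks built around them have length $2$ or $3$, so consecutive landmarks are $O(1)$ apart; together with the already-noted characteristic that every substring longer than $2\logstar|S|$ contains a landmark, this confines the search to positions $[i-c_0,\,i+c_0]$. To decide landmark status for these candidates I need $L^*[i-c_0-1],\ldots,L^*[i+c_0+1]$, since the test at position $j$ consults $L^*[j-1],L^*[j],L^*[j+1]$. Applying the leftward-dependency bound to each required label, the leftmost original symbol involved is $S[i-c_0-1-t]$ and the rightmost is $S[i+c_0+1]$; choosing the constant $c_0$ from the block analysis yields the asymmetric window of $\logstar|S|+5$ symbols to the left and $5$ to the right.

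The main obstacle is pinning down the \emph{constant} landmark density---that blocks have length $2$ or $3$ and hence that landmarks sit $O(1)$ apart, rather than merely the weaker $2\logstar|S|$ gap bound. In the Cormode--Muthukrishnan construction this requires the final refinement passes that drive the label alphabet down to a fixed size, and carefully tracking these passes is precisely what produces the exact additive constant $5$; by contrast, the left/right split of the window follows cleanly and unconditionally from the one-sided reach of the labeling map.
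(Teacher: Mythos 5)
The paper itself does not prove this lemma: it is quoted verbatim from Cormode and Muthukrishnan~\cite{Cormode07}, so your attempt can only be judged on its own merits against that construction. Your first two steps are sound and are indeed the standard argument: the labeling $L[j]=2p+{\it bit}(p,S[j])$ reads only $S[j-1]$ and $S[j]$, so each reduction round extends the dependency window one position to the left and none to the right, and the number of rounds is bounded by $\logstar|S|+O(1)$ because the alphabet shrinks as $[L]\leq 2\lg [S]$ per round.

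The genuine gap is exactly where you say the ``main obstacle'' lies, and deferring it leaves the lemma unproved, because the constant landmark density \emph{is} the content of the lemma. Your justification for it is circular: parse blocks of length $2$ or $3$ are a \emph{consequence} of how the parser fills the gaps between landmarks (by left-aligned parsing), so block length tells you nothing about landmark spacing; and the $2\logstar|S|$ gap bound you invoke is strictly weaker than the constant bound you need --- it confines the nearest landmark to $[i-O(\logstar|S|),\,i+O(\logstar|S|)]$, not to $[i-c_0,i+c_0]$. Indeed, under the description given in this paper (halt the reduction when $[L^*]\leq\logstar|S|$ and take landmarks to be local maxima of $L^*$), consecutive local maxima can genuinely be $\Theta(\logstar|S|)$ apart (a monotone rise of length $[L^*]$ followed by a fall), so the nearest landmark to $S[i]$ can lie $\Theta(\logstar|S|)$ positions to its \emph{right}, and no ``$5$ to the right'' window can suffice. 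The Cormode--Muthukrishnan proof closes this by running extra recoloring passes that drive the final alphabet down to the fixed set $\{0,1,2\}$, which forces landmarks to be $O(1)$ apart; note moreover that each such pass consults \emph{both} neighbors, so it adds rightward as well as leftward reach --- your clean invariant that label information propagates ``strictly from left to right'' must be weakened to ``$O(1)$ reach to the right'' once those passes are included, and it is precisely this bookkeeping that yields the additive constant $5$ on both sides. Without carrying out that part, your argument establishes only a window of $\logstar|S|+O(1)$ to the left and $O(\logstar|S|)$ to the right, which is not the statement claimed.
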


This lemma tells us the following.
Let $S$ be type2 string containing $\alpha$ as $S=x\alpha y\alpha z$.
Using Lemma~\ref{lem1:cm}, when $\alpha$ is sufficiently long (e.g., $|\alpha|\geq 2\logstar |S|$),
there is a partition $\alpha=\alpha_1\alpha_2$ such that
$|\alpha_1|=O(\logstar |S|)$ and whether $\alpha_2[i]$ is landmark or not is coincident
in both occurrences of $\alpha$.

Thus, we can construct a consistent parsing for 
all occurrences of $\alpha_2$ in $S$, which almost covers whole $\alpha$ except a short prefix $\alpha_1$.
Such consistent parsing can be iteratively constructed for $\alpha_2$ as the next $S$ while it is sufficiently long.

\begin{lemma}\label{lem2:cm}\rm(Cormode and Muthukrishnan~\cite{Cormode07})
GC-ESP builds from a string $S$ a parse tree of height $h=O(\lg |S|)$ in $O(|S|\logstar |S|)$ time.
\end{lemma}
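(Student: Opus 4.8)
The statement bundles a height bound $h=O(\lg|S|)$ with a running-time bound $O(|S|\logstar|S|)$, so the plan is to prove them separately, with both resting on a single lemma: one iteration of GC-ESP shrinks the working string by a constant factor. I would first analyze how many symbols each substring $S_i$ of the decomposition $S=S_1\cdots S_\ell$ contributes to the next-level string $S'$. For $|S_i|\ge 2$ the contribution is exactly $\lfloor|S_i|/2\rfloor\le|S_i|/2$: in the type1/type3 (left-aligned) case this is immediate from the construction, and in the type2 case the alphabet-reduction partition uses blocks of length $2$ and $3$ each yielding a single symbol, so the count is only smaller. The only substrings that fail to shrink are those of length $1$, each contributing one symbol; by definition these must be type3 single symbols (type1 needs length $\ge 2$, and for large $|S|$ type2 needs length $\ge\logstar|S|\ge 2$). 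The structural fact I would prove next is that two length-$1$ substrings can never be adjacent: two neighbouring single symbols are either equal, forming a type1 repetition, or distinct, forming a run-free block that maximality of the decomposition would merge into one substring.

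Writing $t$ for the number of length-$1$ substrings, this isolation forces at least $t-1$ separating blocks of length $\ge 2$, whence $|S|\ge t+2(t-1)=3t-2$ and $t\le(|S|+2)/3$. I would then bound $|S'|\le t+\tfrac12(|S|-t)=(|S|+t)/2\le(2|S|+1)/3\le\tfrac34|S|$, the last inequality holding once $|S|\ge 4$. Iterating this contraction, the working string falls from $|S|$ to $O(1)$ in $O(\log_{4/3}|S|)=O(\lg|S|)$ iterations. Since each iteration stacks only $2$-trees (height $1$) and $2$-$2$-trees (height $2$) over the current symbols, it raises the parse tree by at most $2$, so $h=O(\lg|S|)$.

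For the time bound, the dominant cost is alphabet reduction on type2 substrings. Using the stated inequality $[L]\le 2\lg[S]$, one relabeling round sends an alphabet of size $a$ to one of size at most $2\lg a$, and because the map $a\mapsto 2\lg a$ reaches $\logstar|S|$ after $O(\logstar|S|)$ steps, the reduction terminates in $O(\logstar|S|)$ rounds. Each round, together with landmark detection (justified by Lemma~\ref{lem1:cm}) and the replacement step, scans its substring once in time linear in the substring length, since computing the least significant differing bit of $S[i]$ and $S[i-1]$ is a constant-time word operation; the left-aligned parsing of type1/type3 parts is subsumed in this. Hence the work at a level whose string has length $\ell_k$ is $O(\ell_k\,\logstar|S|)$. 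Summing over levels and using $\ell_k\le(3/4)^k|S|$ from the height analysis, $\sum_k\ell_k=O(|S|)$, so the total running time is $O(|S|\,\logstar|S|)$.

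The main obstacle is precisely the length-reduction step: a naive ``the string halves'' claim is false because length-$1$ type3 substrings do not shrink at all, so the constant factor must be recovered by the adjacency/amortization argument above, charging each isolated singleton against a neighbouring block of length $\ge 2$. A secondary point needing care is confirming that the alphabet-reduction round count is genuinely $O(\logstar|S|)$ rather than larger; this is exactly where the bound $[L]\le 2\lg[S]$ and the constant-time bit computation per position are essential.
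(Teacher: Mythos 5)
The paper contains no proof of this lemma to compare against: it is imported verbatim from Cormode and Muthukrishnan~\cite{Cormode07}, and the surrounding text only describes the parsing procedure the lemma refers to. What can be judged is whether your reconstruction is sound, and it is. Your contraction argument correctly repairs the naive ``the string halves'' claim: singletons are necessarily type3, two singletons cannot be adjacent by maximality of the decomposition, hence $t\le(|S|+2)/3$, $|S'|\le(|S|+t)/2\le\frac{3}{4}|S|$ for $|S|\ge 4$, and the height bound follows because each round stacks subtrees of height at most $2$ (2-trees and 2-2-trees) over the current level. The time analysis is likewise the standard one: $O(\logstar |S|)$ alphabet-reduction rounds per level (the factor $2$ in $[L]\le 2\lg [S]$ costs only $O(1)$ extra rounds), linear-time scanning per round via constant-time bit operations, and a geometric sum over levels reusing the contraction factor. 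The one place where you interpolate beyond what the paper specifies is the handling of length-1 type3 blocks: the paper's left-aligned parsing rules are vacuous for $|S_i|=1$, and you assume the lone symbol is simply promoted to the next level, whereas the original ESP formulation merges such a block into a neighboring metablock; either convention satisfies your inequalities, so the bounds stand. Two minor points you could make explicit: the replacement step needs $O(1)$-time reverse-dictionary lookups (e.g., hashing) for the per-level work to be linear, and the claim that new labels stay in a range of size $O(\lg a)$ after each reduction round is what licenses iterating the map $a\mapsto 2\lg a$.
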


\subsection{Algorithms}
We present an algorithm for finding all the occurrences of pattern $P$ in $S \in \Sigma^*$ parsed by ESP.
Let $T_S$ be the parsing tree for $S$ by ESP and $D$ be the resulted dictionary for $T_S$.
We consider this problem of embedding a parsing tree $T_P$ of $P$ into $T_S$ as follows.

First, we construct $T_P$ preserving the labeling in $D$ and a new production rule is generated if its phrase is undefined.

Second, $T_P$ is divided into a sequence of maximal {\em adjacent} subtrees rooted by nodes $v_1,\ldots,v_k$ such that
$yield(v_1\cdots v_k)=P$, where $yield(v)$ denotes the string represented by the leaves of $v$ and 
$yield(v_1\cdots v_k)$ denotes the concatenation of strings $yield(v_1),yield(v_2),...,yield(v_k)$.

If $z$ is the lowest common ancestor of $v_1$ and $v_k$, which is denoted by $z=lca(v_1,v_k)$,
the sequence $v_1,\ldots, v_k$ is said to be embedded into $z$, denoted by $(v_1\cdots v_k)\prec z$.
When $yield(v_1\cdots v_k)=P$, $z$ is called an {\em occurrence node} of $P$.

\begin{definition}\label{evidence}%\rm(Maruyama et al.~\cite{ESP})
An {\em evidence} of $P$ is defined as a string $Q\in (\Sigma\cup V)^*$ of length $k$ satisfying the following condition: 
There is an occurrence node $z$ of $P$ iff
there is a sequence $v_1\cdots v_k$ such that $(v_1\cdots v_k)\prec z$, $yield(v_1\cdots v_k)=P$, and $L(v_1\cdots v_k)=Q$
where $L(v)$ is the variable of $v$ and $L(v_1\cdots v_k)$ is the concatenation.
\end{definition}
 
An evidence $Q$ transforms the problem of finding an occurrence of $P$ into that of embedding a shorter string $Q$ into $T_S$, 
Since a trivial $Q$ with $Q=P$ always exists, this notion is well-defined.
We present an algorithm for extracting evidences. 

{\bf Evidence extraction:} The evidence $Q$ of $P$ is iteratively computed from the parsing of $P$ as follows.
Let $P=\alpha\beta$ for a maximal prefix $\alpha$ belonging to type1, 2 or 3.
For $i$-th iteration of GC-ESP, $\alpha$ and $\beta$ of $P$ are transformed into $\alpha'$ and $\beta'$, respectively.
In case $\alpha$ is not type2, define $Q_i=\alpha$ and update $P:=\beta'$.
In this case, $Q_i$ is an evidence of $\alpha$ and $\beta'$ is an evidence of $\beta$.
In case $\alpha$ is type2, define $Q_i=\alpha[1,j]$ with $j=\min\{p\mid p\geq \logstar|S|,\; P[p]\mbox{ is landmark}\}$
and update $P:=x\beta'$ where $x$ is the suffix of $\alpha'$ deriving only $\alpha[j+1,|\alpha|]$.
In this case, by Lemma~\ref{lem1:cm}, $Q_i$ is an evidence of $\alpha[1,j]$ and $x\beta'$ is an evidence of $\alpha[j+1,|\alpha|]\beta$.
Repeating this process until $|P|=1$, we obtain the evidence of $P$ as
the concatenation of all $Q_i$.
We obtain the upper bound of length $Q$ as follows.
\begin{lemma}\label{evidence-length}\rm(Maruyama et al.~\cite{ESP})
There is an evidence $Q$ of $P$ such that 
$Q=Q_1\cdots Q_k$ where $Q_i\in q^+_i$ ($q_i\in \Sigma\cup V$, $q_i\neq q_{i+1}$) 
and $k=O(\lg |P|\logstar |S|)$.
\end{lemma}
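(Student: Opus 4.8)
The plan is to analyze the evidence-extraction procedure iteration by iteration and to prove two quantitative facts: (i) each iteration advances the parsing by exactly one level, so that the number of iterations is controlled by the height of the parse tree, and (ii) each iteration contributes only $O(\logstar|S|)$ maximal runs to the evidence. Multiplying the two bounds yields $k=O(\lg|P|\logstar|S|)$, and the run decomposition $Q=Q_1\cdots Q_k$ with $q_i\neq q_{i+1}$ is then just the maximal-run factorization of $Q$.

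First I would bound the number of iterations. In every branch of the extraction, the part of $P$ not recorded into the current piece is handed to the next iteration only after one full GC-ESP step: the non-type2 branch sets $P:=\beta'$, and the type2 branch sets $P:=x\beta'$, where both $x$ and $\beta'$ come from a single parsing level. Since one GC-ESP step contracts every maximal segment by a factor of at least two, we get $|P^{(i)}|\le |P^{(i-1)}|/2$, so one iteration corresponds to one level of $T_P$ and the process halts after $O(\lg|P|)$ iterations; this matches the height bound $O(\lg|P|)$ of Lemma~\ref{lem2:cm}. Hence at most $O(\lg|P|)$ pieces $Q_i$ are produced.

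Next I would bound the number of maximal runs in a single piece. If the peeled prefix $\alpha$ is type1, then $\alpha=q^k$ is a single repetition, so $Q_i=\alpha$ is exactly one run. If $\alpha$ is type3, maximality of the segmentation places any repetition into a type1 segment, so $\alpha$ is repetition-free; a repetition-free segment of length at least $\logstar|S|$ would already be type2, forcing $|\alpha|<\logstar|S|$ and thus $O(\logstar|S|)$ runs. If $\alpha$ is type2, then $Q_i=\alpha[1,j]$ with $j=\min\{p\ge \logstar|S|\mid P[p]\text{ is a landmark}\}$, and since every window of length $2\logstar|S|$ contains a landmark (as observed after Lemma~\ref{lem1:cm}) we get $j\le 3\logstar|S|$, again giving $O(\logstar|S|)$ runs. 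Therefore every piece contributes $O(\logstar|S|)$ runs.

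Combining the two bounds, the total number of maximal runs in $Q=Q_1\cdots Q_k$ is $O(\lg|P|)\cdot O(\logstar|S|)=O(\lg|P|\logstar|S|)$, and merging equal symbols meeting at piece boundaries can only decrease this count, so the decomposition with $q_i\neq q_{i+1}$ satisfies $k=O(\lg|P|\logstar|S|)$; that $Q$ is genuinely an evidence follows from the per-step correctness already recorded before the lemma, where each $Q_i$ is shown to be an evidence of its prefix via Lemma~\ref{lem1:cm}. The hard part will be the type2 case: one must use Lemma~\ref{lem1:cm} not only to locate the first landmark within $O(\logstar|S|)$ positions, but also to certify that the suffix $x$ passed to the next level is parsed consistently with the occurrences of the corresponding substring in $S$, so that pushing it up a level does not silently inflate the evidence. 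Carrying this local-consistency bookkeeping uniformly across all $O(\lg|P|)$ levels is the delicate step I expect to require the most care.
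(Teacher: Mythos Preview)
The paper does not actually prove this lemma; it is stated as a cited result from Maruyama et al.~\cite{ESP}, with the paragraph preceding it only describing the evidence-extraction procedure and asserting its correctness. Your proposal is therefore not competing with a proof in the paper but rather supplying the argument the paper omits, and your outline is the natural and correct way to extract the bound from that procedure: the per-iteration halving gives $O(\lg|P|)$ iterations via Lemma~\ref{lem2:cm}, and the case split on the type of the peeled prefix $\alpha$ gives $O(\logstar|S|)$ runs per piece (one run for type~1, length $<\logstar|S|$ for type~3 by maximality of the segmentation, and $j\le 3\logstar|S|$ for type~2 by the landmark-density remark following Lemma~\ref{lem1:cm}).

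Two small remarks. First, your halving inequality in the type~2 branch should be stated as $|P^{(i+1)}|\le \lceil|\alpha|/2\rceil+\lceil|\beta|/2\rceil$, which is $|P^{(i)}|/2+O(1)$ rather than a clean halving; this is harmless for the $O(\lg|P|)$ bound but worth writing precisely. Second, you are right that the genuinely delicate part is not the counting but the correctness of the evidence property across levels in the type~2 case; the paper dispatches this in one line by invoking Lemma~\ref{lem1:cm}, and a full proof would indeed need to carry the consistency of the suffix $x$ through the recursion, exactly as you anticipate. That bookkeeping lives in \cite{ESP}, not here.
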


Thus, we can obtain the time complexity of the pattern finding problem.

{\bf Counting, locating, and extracting:} Given $T_S$ and an evidence $Q$ of $P$, 
a node $z$ in $T_S$ is an occurrence node of $P$ iff there is a sequence $v_1,\ldots, v_k$ such that 
$(v_1,\ldots, v_k)\prec z$ and $L(v_1\cdots v_k)=Q$.
Thus, it is sufficient to adjacently embed all subtrees of $v_1,\ldots, v_k$ into $T_S$.
We recall the fact that the subtree of $v_1$ is left adjacent to that of $v_2$
iff $v_2$ is a leftmost descendant of ${\it right\_child}(lra(v_1))$ where
$lra(v)$ denotes the {\em lowest right ancestor of} $v$, i.e.,
$v$ is the lowest ancestor of $x$ such that the path from $v$ to $x$
contains at least one left edge.
Because $z=lra(v_1)$ is unique and the height of $T_S$ is $O(\lg|S|)$,
we can check whether $(v_1,v_2)\prec z$ in $O(\lg|S|)$ time.
Moreover, $(v_1,v_2,v_3)\prec z'$ iff $(z,v_3)\prec z'$ (possibly $z=z'$).
Therefore, when $|Q_i|=1$ for each $i$, we can execute the embedding of whole $Q$
in $t=O(\lg|P|\lg|S|\logstar|S|)$ time.
For general case of $Q_i\in q^+_i$, the same time complexity $t$ is obtained in Lemma~\ref{evidence-time}.

\begin{lemma}\label{evidence-time}\rm(Maruyama et al.~\cite{ESP})
The time complexity of embedding the evidence of $P$ into $T_S$ is 
$O(\lg|P|\lg|S|\logstar|S|)$.
\end{lemma}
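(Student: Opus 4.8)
The plan is to reduce the general case to the single-symbol case already settled above. The text shows that when every block satisfies $|Q_i|=1$, the sequence $v_1\cdots v_k$ is embedded by $k$ successive adjacency checks, each costing $O(\lg|S|)$ through the $lra$ computation, for a total of $O(\lg|P|\lg|S|\logstar|S|)$ via Lemma~\ref{evidence-length}. For the general evidence $Q=Q_1\cdots Q_k$ with $Q_i=q_i^{e_i}$, I would prove that each maximal run $q_i^{e_i}$ is embedded in $O(\lg|S|)$ time, \emph{independently of the exponent} $e_i$. Since Lemma~\ref{evidence-length} bounds the number of runs by $k=O(\lg|P|\logstar|S|)$, this yields exactly the claimed total time.

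First I would establish the structural fact behind the speed-up. At every level of $T_S$ a maximal run of a single symbol is a type1 substring, hence parsed by left-aligned parsing, which builds a balanced binary tree over the repeated symbol. Consequently, climbing the left spine of such a run from its leftmost node labeled $q$, the subtrees hanging off the spine represent blocks of $q$ of geometrically doubling sizes $1,2,4,\ldots$, and after $O(\lg e_i)$ levels the whole run of length $e_i$ is subsumed. I would prove this by induction on the parsing levels: left-aligned parsing pairs adjacent equal symbols, so a block of size $2^{j}$ collapses to a single variable one level up, and the block lengths are available in $O(1)$ time because $|X|=|X(D)|$ is stored for every variable in $D$.

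Given this structure, a run $q_i^{e_i}$ is embedded by a single upward traversal rather than by $e_i$ separate adjacency checks. Starting from the leftmost candidate node $v$ labeled $q_i$, I would climb the left spine by parent steps, each costing $O(1)$; the accumulated block lengths, read from $D$, realize the binary decomposition of $e_i$ and thereby produce a canonical cover of $O(\lg e_i)$ nodes whose concatenation is exactly $q_i^{e_i}$. Since the run has length at most $|S|$, the spine has height $O(\lg|S|)$ by Lemma~\ref{lem2:cm}, so the cover is found in $O(\lg|S|)$ time. One further adjacency check, costing $O(\lg|S|)$ by an $lra$ computation on the (possibly high) top cover node, links this run to the embedding of $Q_{i+1}$ through the recurrence $(z,v_{i+1})\prec z'$ already used for single symbols. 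Hence each run is embedded in $O(\lg|S|)$ time, and summing over the $k$ runs gives $O(k\lg|S|)=O(\lg|P|\logstar|S|\cdot\lg|S|)=O(\lg|P|\lg|S|\logstar|S|)$.

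The step I expect to be the main obstacle is the treatment of the two ends of each run. Where a repetition abuts a different symbol the left-aligned parsing can leave an incomplete final block or, for an odd-length tail, a $2$-$2$-tree, so the balanced spine is perturbed near its top; I would argue that such irregularities touch only $O(1)$ nodes per level and thus $O(\lg|S|)$ nodes along the whole spine, and are absorbed into the per-run budget. More delicate is the requirement that, at an occurrence of $P$, the run of $q_i$ in $S$ be aligned with the evidence so that exactly $e_i$ copies are matched; here I would invoke the consistency guarantee of Lemma~\ref{lem1:cm}, by which whether a boundary symbol is a landmark depends on only $O(\logstar|S|)$ neighbouring symbols, so at most $O(\logstar|S|)$ extra checks are ever incurred at a boundary. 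Both contributions stay within $O(\lg|S|)$ per run, leaving the asymptotic bound unchanged.
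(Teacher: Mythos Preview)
The paper does not prove this lemma; it is quoted from Maruyama et al.\ and the text preceding the statement only argues the special case where every $|Q_i|=1$, explicitly deferring the general case $Q_i\in q_i^+$ to the cited reference. There is therefore no proof in this paper to compare against; your proposal is an attempt to reconstruct the deferred argument.

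Your plan is the natural one and is essentially correct: bound the number $k$ of maximal runs by Lemma~\ref{evidence-length}, then show each run $q_i^{e_i}$ embeds in $O(\lg|S|)$ time regardless of $e_i$, using that left-aligned parsing halves a type1 block at every level so that a subrange of length $e_i$ is covered by $O(\lg e_i)$ canonical nodes whose derived lengths are available from the stored values $|X|$. Chaining the top of this cover to $Q_{i+1}$ by one $lra$ step, exactly as in the single-symbol case, then gives the claimed total.

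Two imprecisions are worth fixing. First, the starting node $v$ of a run inside $T_S$ is fixed by adjacency to the previously embedded block, not by being leftmost in the repetition; the run of $q_i$'s in $S$ may extend to the left of $v$, so $v$ can be interior to the balanced structure over a longer block. The bound survives---any range inside a balanced binary tree still admits an $O(\lg e_i)$ canonical cover---but ``climb the left spine from the leftmost node'' should be replaced by a two-sided segment-tree-style decomposition. Second, your appeal to Lemma~\ref{lem1:cm} at run boundaries is off target: that lemma concerns landmarks in type2 substrings, whereas repetitions are type1 and are parsed purely positionally by left-aligned pairing. The only boundary irregularity you need to absorb is the odd-length $2$-$2$-tree at the right end of each level, an $O(1)$ perturbation per level and hence $O(\lg|S|)$ in total, which you already budget for. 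With these adjustments the argument goes through and yields the stated bound.
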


Thus, counting $P$ in $S$ is $O(|P|\lg^*|P| + occ_c\cdot t)$
where $occ_c$ is the frequency of the largest embedded subtree that is called {\em core}.
With a auxiliary data structure storing $|X|$, the length of string derived from $X\in V$,
locating $P$ can be computed in the same time complexity.
Since $T_S$ is balanced, the substring extraction of $S[i,i+m]$ can be computed in $O(m+\lg|S|)$ time.

ESP-index was implemented by LOUDS~\cite{Delpratt06} and permutation~\cite{Munro03} with the time-space trade-off parameter 
$\varepsilon\in (0,1)$, and it supports queries of counting/locating patterns and extracting of substrings.

\begin{theorem}\label{original-ESP-index}\rm(Maruyama et al.~\cite{ESP})
Let $|S|=u$, $|P|=m$, and $n=|V(G)|$ with the GC-ESP $G$ of $S$.
The time for counting and locating is
$O(\frac{1}{\varepsilon}(m\lg n+occ_c\cdot\lg m\lg u)\logstar u)$ 
and the time for extracting substring $S[i,i+m]$ is $O(\frac{1}{\varepsilon}(m+\lg u))$
with $(1+\varepsilon)n\lg n +4n + n\lg u + o(n)$ bits of space and any $\varepsilon\in (0,1)$.
\end{theorem}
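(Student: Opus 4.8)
The plan is to prove the three quantities---counting/locating time, extraction time, and space---separately, in each case obtaining the bound by multiplying the abstract algorithmic cost already established in Lemmas~\ref{evidence-length} and~\ref{evidence-time} by the per-operation cost of the concrete LOUDS-plus-permutation encoding of the DAG. First I would settle the space. Collapsing $T_S$ to a DAG over its $n$ distinct variables, I would store the two ordered-tree components of the DAG by LOUDS, costing $2n+o(n)$ bits each and hence $4n+o(n)$ bits including the rank/select overhead, and store the cross-links between the two trees as a permutation on $[n]$ using the representation of Munro et al.\ with trade-off parameter $\varepsilon$; this occupies $(1+\varepsilon)n\lg n$ bits and answers a forward application in $O(1)$ time and its inverse in $O(1/\varepsilon)$ time. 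To enable locating and extraction I would additionally keep the array of derived lengths $|X(D)|$, adding $n\lg u$ bits. Summing and absorbing the remaining lower-order terms into $o(n)$ yields $(1+\varepsilon)n\lg n+4n+n\lg u+o(n)$.

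For counting I would split the work into building the evidence and embedding it. Running GC-ESP on $P$ to construct $T_P$ and extract $Q$ touches $O(m)$ symbols summed over its $O(\lg m)$ levels (Lemma~\ref{lem2:cm}), and each digram encountered must be resolved against the reverse dictionary $D^{-1}$; in the original index this is a binary search over the sorted variables costing $O(\lg n)$ per lookup, while every phrase access through the permutation costs $O(1/\varepsilon)$, and the alphabet-reduction step of Lemma~\ref{lem1:cm} contributes the $\logstar u$ factor (using $\logstar m\le\logstar u$ since $m\le u$). This phase therefore costs $O(\frac{1}{\varepsilon}m\lg n\logstar u)$. By Lemma~\ref{evidence-length} the resulting evidence has $k=O(\lg m\logstar u)$ maximal pieces, and by Lemma~\ref{evidence-time} one embedding attempt of $Q$ into $T_S$ takes $O(\lg m\lg u\logstar u)$ abstract navigation steps; repeating this at each of the $occ_c$ core occurrences with $O(1/\varepsilon)$ per step (the permutation inverse invoked while climbing via $lra$) gives $O(\frac{1}{\varepsilon}occ_c\lg m\lg u\logstar u)$. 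Adding the two phases produces the stated counting bound.

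For locating and extraction I would argue as follows. Locating reuses the embedding traversal but additionally recovers the text offset of each occurrence by summing the stored lengths $|X(D)|$ along the path, which adds no asymptotic cost, so locating matches counting. For extracting $S[i,i+m]$, Lemma~\ref{lem2:cm} guarantees that $T_S$ has height $O(\lg u)$; I would descend from the root to the leaf at position $i$ in $O(\lg u)$ steps and then read the following $m$ leaves, each navigation step costing $O(1/\varepsilon)$ through the permutation, for a total of $O(\frac{1}{\varepsilon}(m+\lg u))$.

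The step I expect to be the main obstacle is the amortization in the evidence-building phase: one must verify that resolving digrams across all $O(\lg m)$ parsing levels contributes only a single $\lg n$ (binary-search) factor times the $O(m)$ total symbol count---not a $\lg n$ per level---and that the $1/\varepsilon$ and $\logstar u$ factors attach without compounding with one another or with the $\lg n$. Making this bookkeeping rigorous, and confirming that each DAG navigation invokes exactly the permutation-inverse operation (the $O(1/\varepsilon)$ side of the trade-off) rather than the $O(1)$ forward direction, is the delicate part; the remaining estimates then follow routinely from Lemmas~\ref{evidence-length} and~\ref{evidence-time}.
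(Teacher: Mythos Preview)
Your proposal is correct and follows essentially the same route that the paper (and the cited Maruyama et al.\ work) takes: the paper itself does not give a formal proof of this theorem but merely quotes it from~\cite{ESP}, with the surrounding text supplying exactly the ingredients you use---Lemmas~\ref{lem2:cm}, \ref{evidence-length}, \ref{evidence-time} for the abstract costs, the LOUDS encoding of the two spanning trees for the $4n+o(n)$ term, the Munro et al.\ permutation for the $(1+\varepsilon)n\lg n$ term and the $1/\varepsilon$ per-access cost, the binary search on $T_L$ for the $\lg n$ reverse-dictionary cost, and the length array for the $n\lg u$ term. Your worry about the $1/\varepsilon$ and $\lg n$ factors compounding is resolved exactly as you suspect: each of the $O(\lg n)$ comparisons in a binary search requires a phrase access through the permutation at cost $O(1/\varepsilon)$, which is why the theorem places $1/\varepsilon$ outside the whole bracket; this is consistent with the proof of Theorem~\ref{esp:th1}, where the analogous $\lg\lg n$ replaces both $1/\varepsilon$ and $\lg n$ simultaneously.
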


%%%%%% Sec4 
\section{ESP-index-I}
We present ESP-index-I for faster query searches than ESP-index. 
ESP-index-I encodes CFGs into a succinct representation by leveraging the idea behind GMR~\cite{Golynski06}, a rank/select dictionary for large alphabets.

{\bf DAG representation:}
we represent a CFG $G$ as a DAG where
$Z\to XY\in P$ is considered as two directed left edge $(Z,X)$ and right edge $(Z,Y)$,
i.e., $G$ can be seen as a DAG with a single source and $|\Sigma|$ sinks.
%These $(Z,X)$ and $(Z,Y)$ are called left edge and right edge, respectively.
By introducing a super-sink $s$ and drawing left and right edges from any sink to $s$,
we can obtain the DAG with a single source/sink equivalent to $G$.
We denote the DAG as $DAG(G)$ (Figure~\ref{fig:dag}).
$DAG(G)$ is decomposed into two spanning trees $T_L$ and $T_R$ consisting of 
the left edges and the right edges, respectively.
ESP-index reconstructs $G$ with a permutation $\pi:V(T_L)\to V(T_R)$ from $(T_L,T_R,\pi)$.
Instead, ESP-index-I reconstructs and traverses $G$ by using GMR. 

\begin{figure*}[t]
\begin{center}
\includegraphics[width=0.8\textwidth]{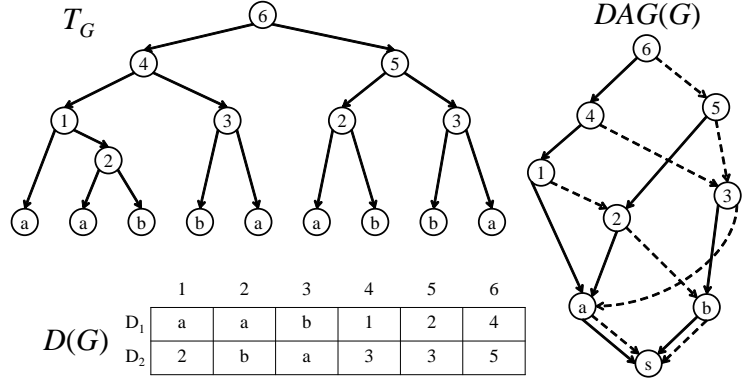}
\end{center}
\vspace{-0.7cm}
\caption{Grammar compression $G$ and its parsing tree $T_G$,
DAG representation $DAG(G)$, and array representation $D(G)$,
where $\Sigma=\{\mbox{a},\mbox{b}\}$ and $V=\{1,2,3,4,5,6\}$.
In $DAG(G)$, the left edges are shown by solid lines.
$D(G)$ itself is an implementation of the phrase dictionary.} 
\label{fig:dag}
\end{figure*}

{\bf Succinct encoding of phrase dictionary:}
For a grammar compression $G$ with $n$ variables,
the set $D(G)$ of production rules is represented by 
a phrase dictionary $D[D_1[1,n],D_2[1,n]]$ 
such that $X_k\to X_iX_j\in D(G)$ iff $D_1[k]=i$ and $D_2[k]=j$.
We consider a permutation $\pi:V\to V$ such that $\pi(D_1)$ is monotonic, i.e.,
$\pi(D_1[i])\leq \pi(D_1[i+1])$.
Then $D$ is transformed into an equivalent $\pi(D)=[\pi(D_1),\pi(D_2)]$
and let $D:=\pi(D)$ (Figure~\ref{fig:dag}).
The monotonic sequence $D_1$ is encoded by the bit vector $B(D_1)$ as follows.

\[
B(D_1) = 0^{D_1[1]}10^{D_1[2]-D_1[1]}1\cdots 0^{D_1[n]-D_1[n-1]}1
\]

By this, we can get $D_1[k]={\rm select}_1(B(D_1),k)-k$ in $O(1)$ time
with $2n+o(n)$ bits of space.
GMR encodes the sequence $D_2$ into $A(D_2)$ with $n\lg n+o(n\lg n)$ bits of space.
We can get $D_2[k]={\rm access }(A(D_2),k)$ in $O(\lg\lg n)$ time.
Thus, we can simulate the phrase dictionary $D$ by $(B(D_1),A(D_2))$.
The access/rank/select on $B(D_1)$ support to traverse $T_L$ and 
the same operations on $A(D_2)$ support to traverse $T_R$.
Thus, we can traverse the whole tree $T_S$ equivalent to $DAG(G)$.

{\bf Simulation of reverse dictionary:}
The improved index, referred as to ESP-index-I, of string $S$ is denoted by $(B(D_1),A(D_2))$.
After indexing $S$, since the memory for $D^{-1}$ is released,
we must construct GC-ESP of pattern $P$ simulating $D^{-1}$ by $(B(D_1),A(D_2))$ 
for counting and locating $P$ in $S$.
To remember $D^{-1}$, the original ESP-index uses the binary search on $T_L$.
On the other hand, we adopt $A(D_2)$ for simulating $D^{-1}$ by an advantage of response time.
Indeed, we can improve the time $O(\lg n)$ to $O(\lg\lg n)$ for a query.
To get $D^{-1}(X_iX_j)=X_k$, we can get the value of $k$ as follows, where let $B=B(D_1)$ and $A=A(D_2)$.

\begin{enumerate}
\item[(1)] Let $p={\rm select}_0(B,i)-i$ and $q={\rm select}_0(B,i+1)-(i+1)$.
\item[(2)] Let $r={\rm select}_j(A,{\rm rank}_j(A,p)+1)$.
\item[(3)] $k=r$ if $r\leq q$, and no $X_k\to X_iX_j$ exists otherwise.
\end{enumerate}

Since $D_1$ is monotonic, we can restrict the range $k\in [p,q]$ by operation (1).
By (2) and (3), we can check if $X_j\in D_2[p,q]$.
If $X_j\in D_2[p,q]$, its position is the required $k$, and
$X_j\not\in D_2[p,q]$, there is no production rule of $X_k\to X_iX_j\in D$.
The execution time of (1), (2), and (3) are $O(1)$, $O(\lg\lg n)$, and $O(1)$, respectively.
The construction of the ESP-index-I is described in Algorithm~\ref{algo:ESP}.

\begin{algorithm}[t]
{\footnotesize
\caption{Construction of ESP-index-I.
$S\in\Sigma^*$: input string, $D=\emptyset$: phrase dictionary.}
\label{algo:ESP}
}
{\footnotesize
\begin{algorithmic}[1]
\Function {ESP-index-I}{}
\While {$|S|>1$}
\State {$D':=$ GC-ESP$(S)$} \Comment{phrase dictionary at each height}
\State {{\rm SORT}$(D')$}   \Comment{renaming for binary search in $D^{-1}$}
\State {$D:=D\cup D'$}
\EndWhile
\State {return $(B(D_1),A(D_2))$}
\EndFunction

\Function{GC-ESP}{S}
\State {set $D=\emptyset$}
\State {execute GC-ESP s.t. $S'(D)=S$}
\State {$S:= S'$}
\State {return $D$}
\EndFunction

\Function{SORT}{D} \Comment{$D=[D_1[1,n],D_2[1,n]]$}
\State {find $\pi:V\to V$ s.t. $\pi(D_1)$ is monotonic} \Comment{$V=\{1,2,\ldots,n\}$}
\State {$D:= [\pi(D_1),\pi(D_2)]$}
\EndFunction

\end{algorithmic}
}
\end{algorithm}

\begin{theorem}\label{esp:th1}
Counting time of ESP-index-I is $O((m+occ_c\lg m\lg u)\lg\lg n\logstar u)$
with $2n+n\lg n + o(n\lg n)$ bits of space.
\end{theorem}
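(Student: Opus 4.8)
The plan is to combine the combinatorial bound on evidence length from the earlier lemmas with the operation costs established for the new $(B(D_1),A(D_2))$ representation, essentially retracing the analysis behind Theorem~\ref{original-ESP-index} but substituting the improved dictionary-access costs. The counting procedure has three phases: (i) construct the parse tree $T_P$ of the pattern $P$ while simulating the reverse dictionary $D^{-1}$ to reuse existing variables, (ii) extract an evidence $Q$ of $P$, and (iii) embed $Q$ into $T_S$ to locate all occurrence nodes. I would bound each phase separately and then sum.

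**First I would** handle phase (i). Building $T_P$ by GC-ESP costs $O(m\logstar u)$ time for the parsing itself by Lemma~\ref{lem2:cm}, but each digram encountered requires a reverse-dictionary lookup $D^{-1}(X_iX_j)$. In the original ESP-index this costs $O(\lg n)$ per lookup via binary search on $T_L$; here the three-step procedure using ${\rm select}_0$, ${\rm rank}_j$, and ${\rm select}_j$ on $(B,A)$ costs $O(1)+O(\lg\lg n)+O(1)=O(\lg\lg n)$ per lookup, as argued just above the theorem. Since the number of digrams processed over all levels of $T_P$ is $O(m\logstar u)$, this phase costs $O(m\lg\lg n\logstar u)$. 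For phase (ii), the evidence $Q$ has length $k=O(\lg m\logstar u)$ by Lemma~\ref{evidence-length}, and extracting it is subsumed in the parsing of $P$, so it contributes no dominant term.

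**Next I would** analyze phase (iii), the embedding. By Lemma~\ref{evidence-time}, embedding the evidence of $P$ into $T_S$ takes $O(\lg m\lg u\logstar u)$ time \emph{per candidate occurrence}, where each adjacency check $(v_i,v_{i+1})\prec z$ walks a root-to-node path of height $O(\lg u)$ and, crucially, each step along that path resolves a production rule via $D$. In ESP-index-I these navigation steps use ${\rm access}(A(D_2),k)$ to traverse $T_R$ and ${\rm select}_1(B(D_1),k)$ to traverse $T_L$; the $A(D_2)$ access is the expensive one at $O(\lg\lg n)$, so each tree-walk step carries a $\lg\lg n$ factor rather than being $O(1)$. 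Multiplying the $occ_c$ candidates by the per-candidate embedding cost and this extra dictionary factor yields $O(occ_c\lg m\lg u\lg\lg n\logstar u)$. Adding phases (i)--(iii) gives the claimed $O((m+occ_c\lg m\lg u)\lg\lg n\logstar u)$. The space bound $2n+n\lg n+o(n\lg n)$ follows directly from the component sizes already stated: $B(D_1)$ uses $2n+o(n)$ bits and $A(D_2)$ uses $n\lg n+o(n\lg n)$ bits under GMR.

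**The main obstacle** I anticipate is being precise about where the $O(\lg\lg n)$ factor attaches in the tree navigation and ensuring it does not compound with the $O(\lg u)$ height in an unexpected way---the subtle point is that every elementary tree-traversal step that was $O(1)$ in the permutation-based ESP-index now costs the GMR access time $O(\lg\lg n)$, so I must confirm that the $\logstar u$ factor from Lemma~\ref{evidence-time} and the $\lg\lg n$ dictionary factor multiply cleanly rather than interacting. I would also need to verify that the reverse-dictionary simulation genuinely returns the same variable assignment as the original $D^{-1}$ (so that the evidence extracted for $P$ remains valid against $T_S$), which is exactly what the correctness argument for steps (1)--(3) and the monotonicity of $D_1$ guarantee.
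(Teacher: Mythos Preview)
Your proposal is correct and follows essentially the same approach as the paper's own proof: decompose into the cost $t_1$ of building $T_P$ and extracting the evidence (Lemma~\ref{lem2:cm} plus the $O(\lg\lg n)$ reverse-dictionary simulation) and the cost $t_2$ of embedding the evidence into $T_S$ (Lemma~\ref{evidence-time} with each parent--child access replaced by an $O(\lg\lg n)$ GMR operation), then sum and read off the space from $B(D_1)$ and $A(D_2)$. If anything, your version is slightly more explicit than the paper about where the $\lg\lg n$ factor attaches in the embedding phase.
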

\begin{proof}
By Lemma~\ref{lem2:cm}, the time to build a parsing tree $T_P$ of pattern $P$ is $O(m\lg^* u)$.
The time to simulate the reverse dictionary is $O(\lg\lg n)$ per one input.
Thus, we can find the evidence $Q$ satisfying Lemma~\ref{evidence} in $t_1 = O(m\lg\lg n\lg^* u)$.
By Lemma~\ref{evidence-time}, we can embed $Q$ into $T_S$ in $O(\lg m\lg u\lg^* u)$ time.
To simulate this embedding on $(B(D_1),A(D_2))$, the cost to access the parent-child of any node
is $O(\lg\lg n)$ time.
Thus, the time of the embedding on a maximal core of $occ_c$ times is $t_2 = O(occ_c\lg m\lg u\lg^* u)$.
Then, the counting time is $O(t_1+t_2)$.
The size of the data structures for $B(D_1)$ and $A(D_2)$ are
$2n+o(n)$ and $n\lg n+o(n\lg n)$ bits, respectively.
\end{proof}

\begin{theorem}\label{esp:th2}
With auxiliary $n\lg u + o(n)$ bits of space,
ESP-index-I supports locating in the same time complexity
and also supports extracting in $O((m+\lg u)\lg\lg n)$ time
for any substring of length $m$.
\end{theorem}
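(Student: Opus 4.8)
The plan is to treat the two assertions separately, in both cases reusing the navigation primitives over $(B(D_1),A(D_2))$ developed for counting and adding one auxiliary array. Specifically, I would store $L[1,n]$ with $L[k]=|X_k|$, the length of the string derived from $X_k$. Since $|X_k|\leq u$ for every $k$, a fixed-width array occupies $n\lg u$ bits, matching the stated auxiliary budget (the $o(n)$ term covering any small rank/select indices), and each lookup $L[k]$ costs $O(1)$. Throughout I would use the fact, already established for counting, that a single parent/child move in the succinct DAG costs $O(\lg\lg n)$: a left-child step is an $O(1)$ select on $B(D_1)$, but a right-child step is an access on $A(D_2)$ and so dominates at $O(\lg\lg n)$.

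For locating, I would reuse the counting procedure, which already identifies the $occ_c$ instances of the core and extends each to an occurrence node $z$ of $P$, and only add the conversion of each $z$ into its starting text position. Because the embedding is top-down, this can be done along the root-to-$z$ path already being traversed: maintain a running offset and, whenever the search descends into a right child, add $L[\cdot]$ of the corresponding left child. By Lemma~\ref{lem2:cm} the path has length $O(\lg u)$, and each of its steps contributes one $O(\lg\lg n)$ navigation together with one $O(1)$ length lookup, so position recovery adds only $O(occ_c\lg u\lg\lg n)$. This is dominated by the counting bound $O((m+occ_c\lg m\lg u)\lg\lg n\logstar u)$ of Theorem~\ref{esp:th1}, so locating stays within the same time complexity.

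For extracting $S[i,i+m]$ I would run the standard balanced-tree routine enabled by $L$ and by the $O(\lg u)$ height from Lemma~\ref{lem2:cm}. First descend from the root to the leaf at text position $i$, at each internal node comparing the remaining offset against $L$ of the left child to decide whether to branch left or right; this takes $O(\lg u)$ steps. Then walk the leaves rightward, emitting symbols until $m$ have been produced; since $T_S$ is balanced this walk touches only $O(m+\lg u)$ tree edges in total. Charging $O(\lg\lg n)$ per edge yields the claimed $O((m+\lg u)\lg\lg n)$ time.

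The step I expect to be delicate is keeping the locating computation genuinely top-down so that the offset of $z$ is available without a separate upward pass, and confirming that every move actually performed during embedding --- in particular right-child and any ancestor moves routed through $A(D_2)$ --- is charged the correct $O(\lg\lg n)$ rather than the cheaper left-edge $O(1)$. Once the balanced-height guarantee of Lemma~\ref{lem2:cm} and this uniform per-edge charge are in place, the extraction bound should follow routinely.
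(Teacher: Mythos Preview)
Your proposal is correct and follows the same approach as the paper: store the derived-string lengths $|X_k|$ in an $n\lg u$-bit array and charge each parent/child navigation in the succinct DAG $O(\lg\lg n)$ instead of the $O(1/\varepsilon)$ of the original ESP-index. The paper's own proof is far terser---it simply invokes Theorem~\ref{original-ESP-index} and substitutes $\lg\lg n$ for the trade-off parameter $1/\varepsilon$---so you have essentially unpacked the mechanics that the paper leaves implicit.
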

\begin{proof}
While ESP-index accesses the parent and children of a node in $1/\varepsilon$ time because it uses a succinct data structure for permutation, 
ESP-index-I takes $O(\lg\lg{n})$ time. 
Thus, the time complexity is obtained from theorem~\ref{original-ESP-index}
by replacing the trade-off parameter $1/\varepsilon$ in ESP-index by $\lg\lg n$.
\end{proof}

\begin{figure}[t]
\begin{center}
\begin{tabular}{cc}
dna.200MB & english.200MB \\
\includegraphics[width=0.48\textwidth]{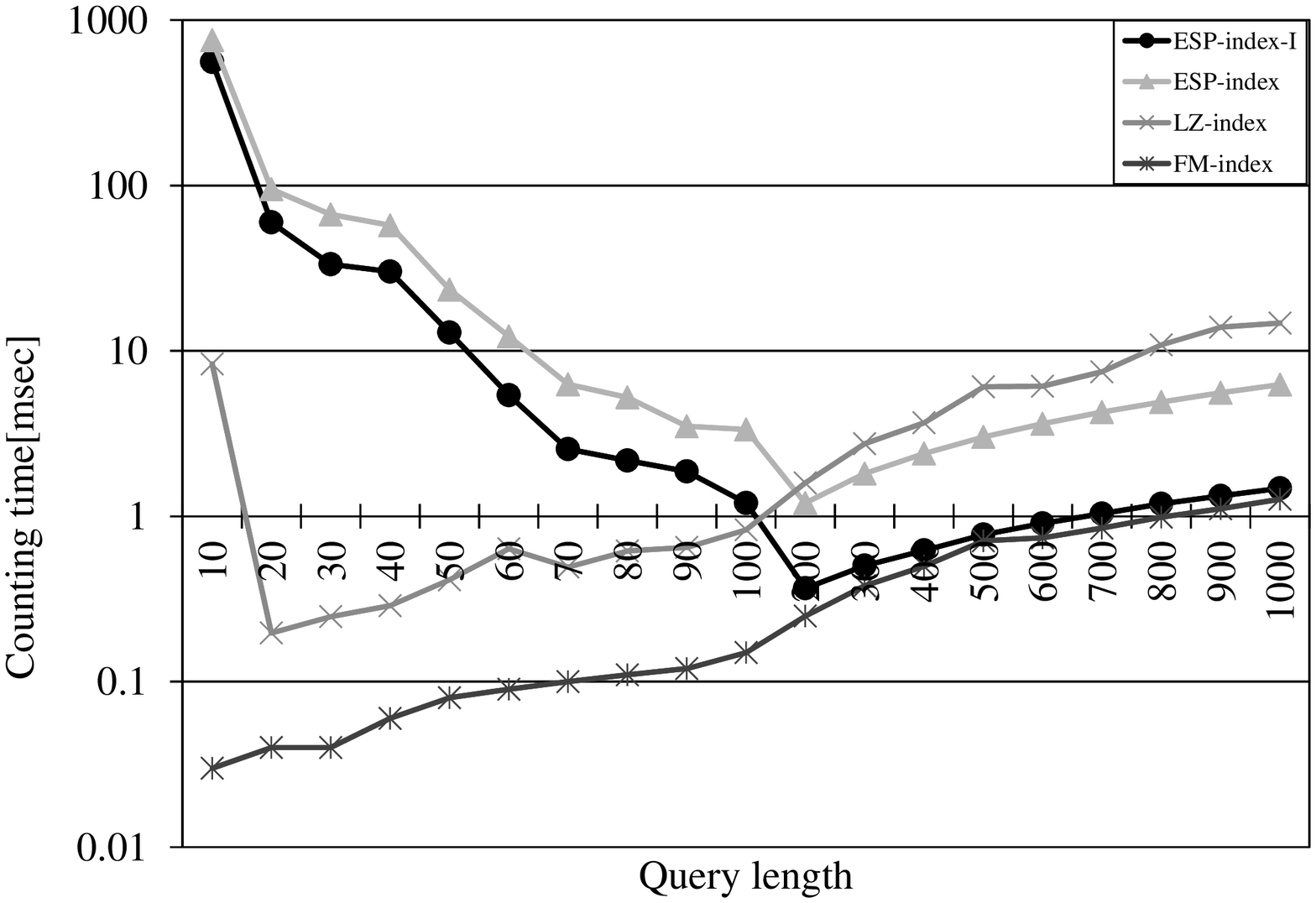} &
\includegraphics[width=0.48\textwidth]{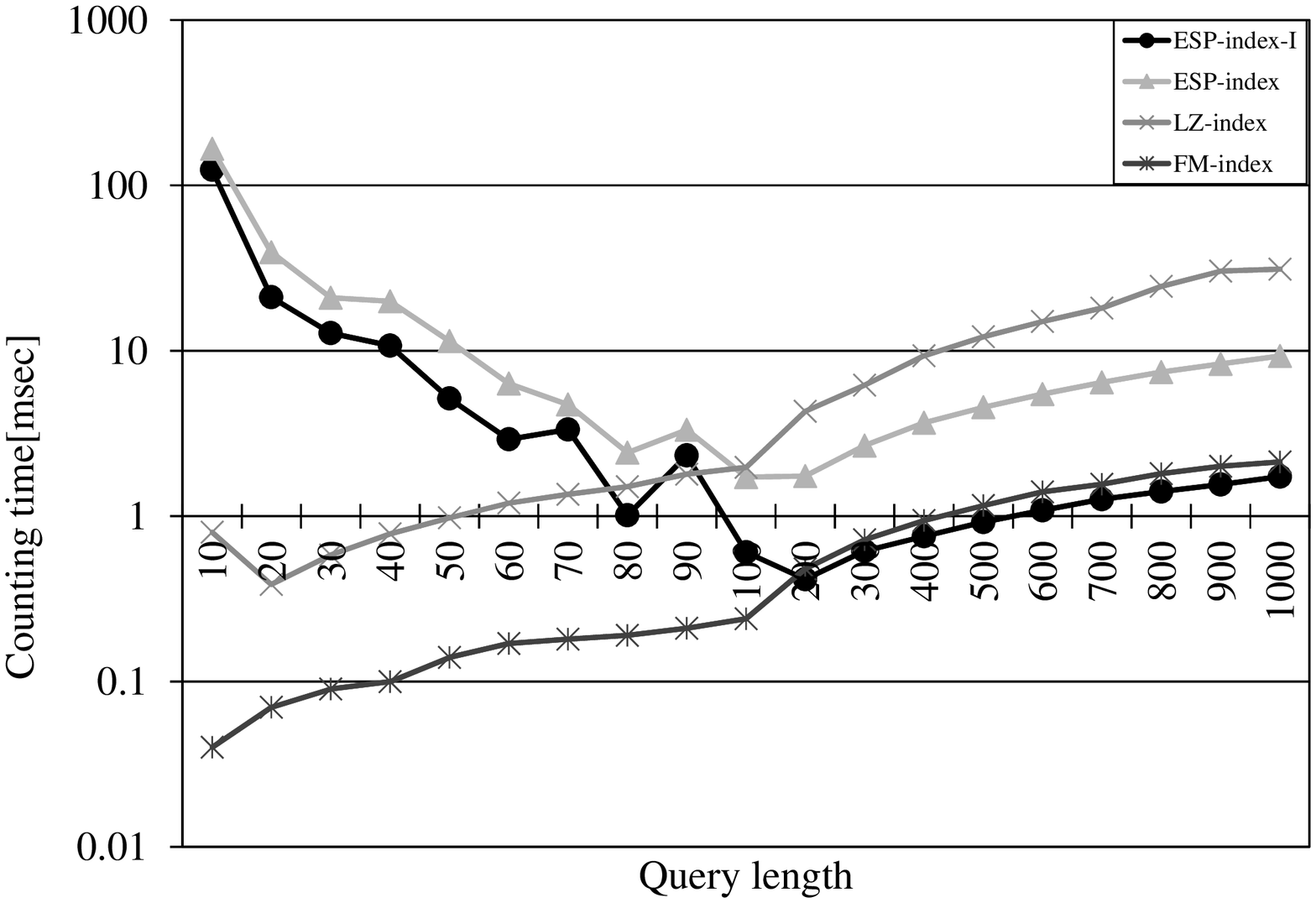} 
\end{tabular}
\end{center}
\vspace{-0.6cm}
\caption{Counting time of each method in milliseconds for dna.200MB (left) and english.200MB (right).}
\label{fig:countpattern}
\begin{center}
\begin{tabular}{cc}
dna.200MB & english.200MB \\ 
\includegraphics[width=0.48\textwidth]{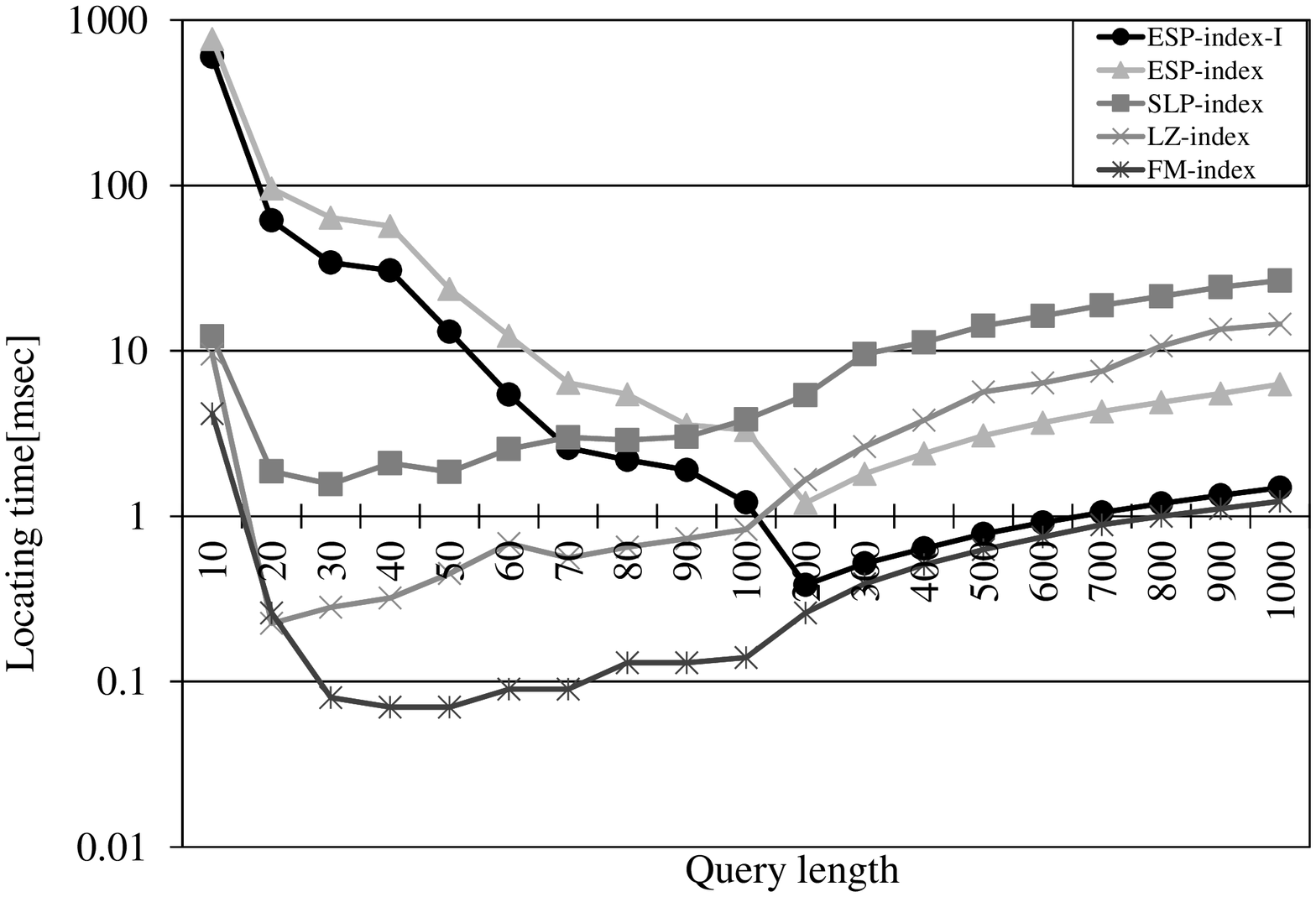} &
\includegraphics[width=0.48\textwidth]{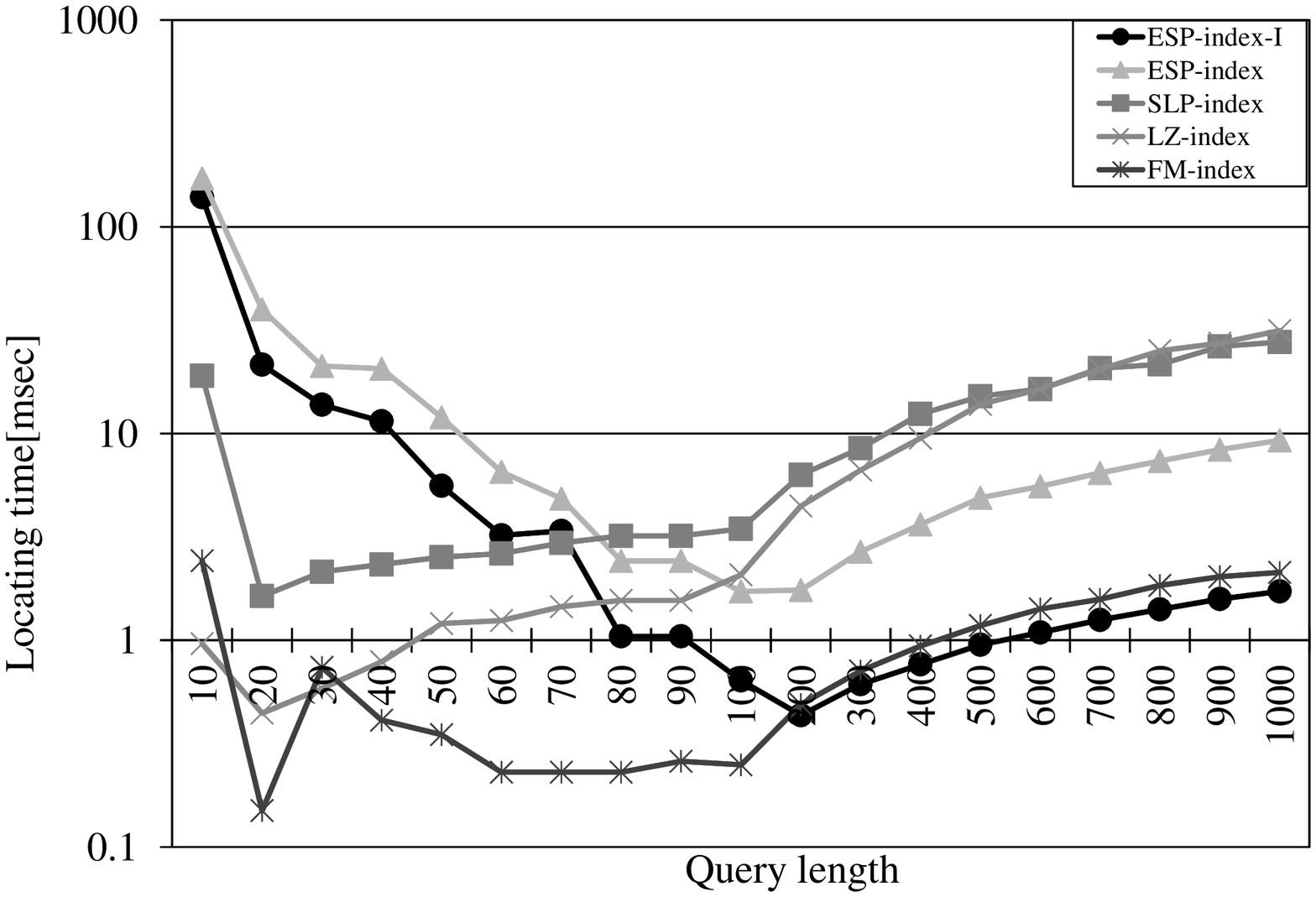} 
\end{tabular}
\end{center}
\vspace{-0.6cm}
\caption{Locating time of each method in milliseconds for dna.200MB (left) and english.200MB (right).}
\label{fig:locatepattern}
\end{figure}

\section{Experiments}
\begin{figure}[t]
\begin{center}
\begin{tabular}{cc}
dna.200MB & english.200MB \\ 
\includegraphics[width=0.48\textwidth]{./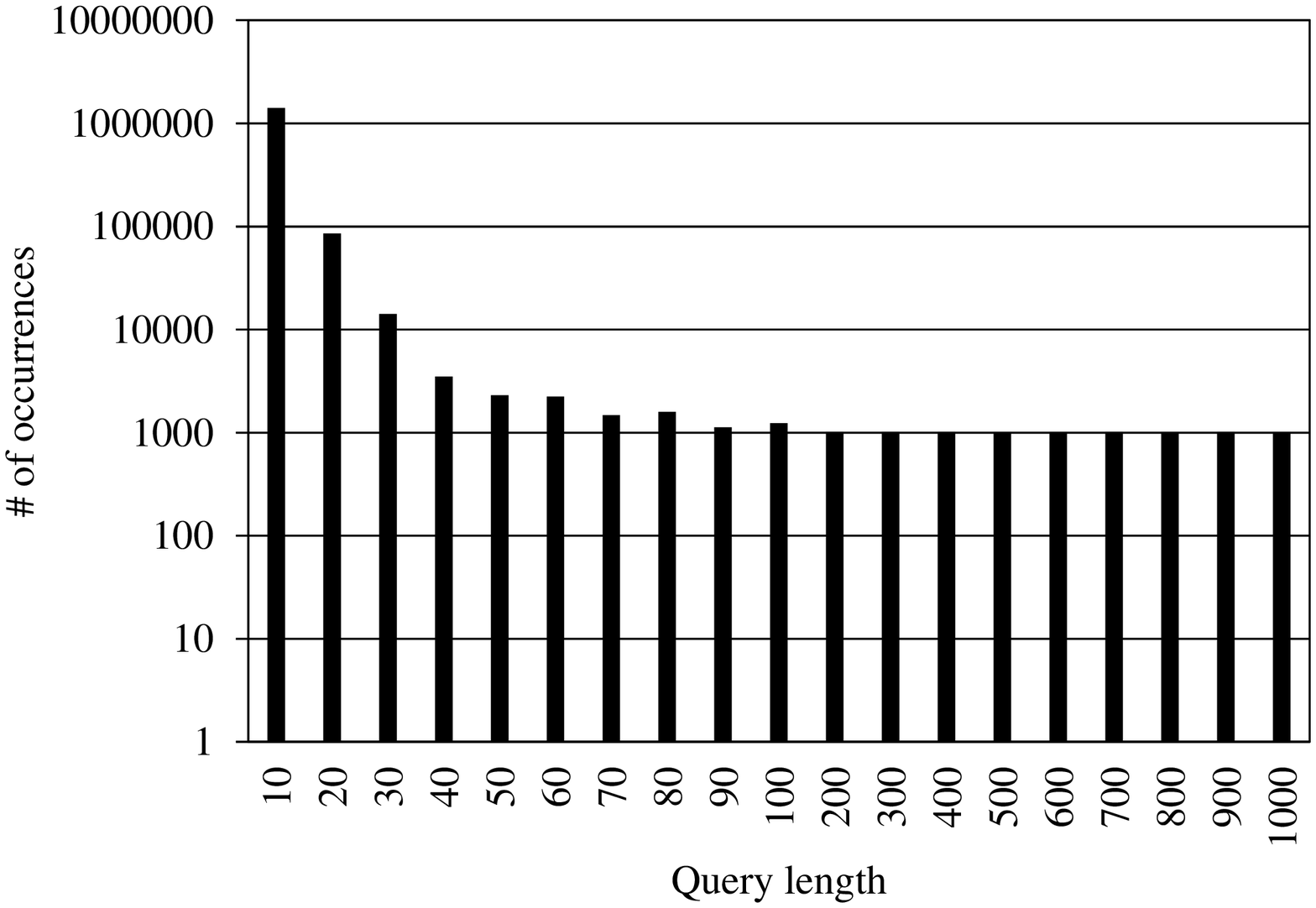} &
\includegraphics[width=0.48\textwidth]{./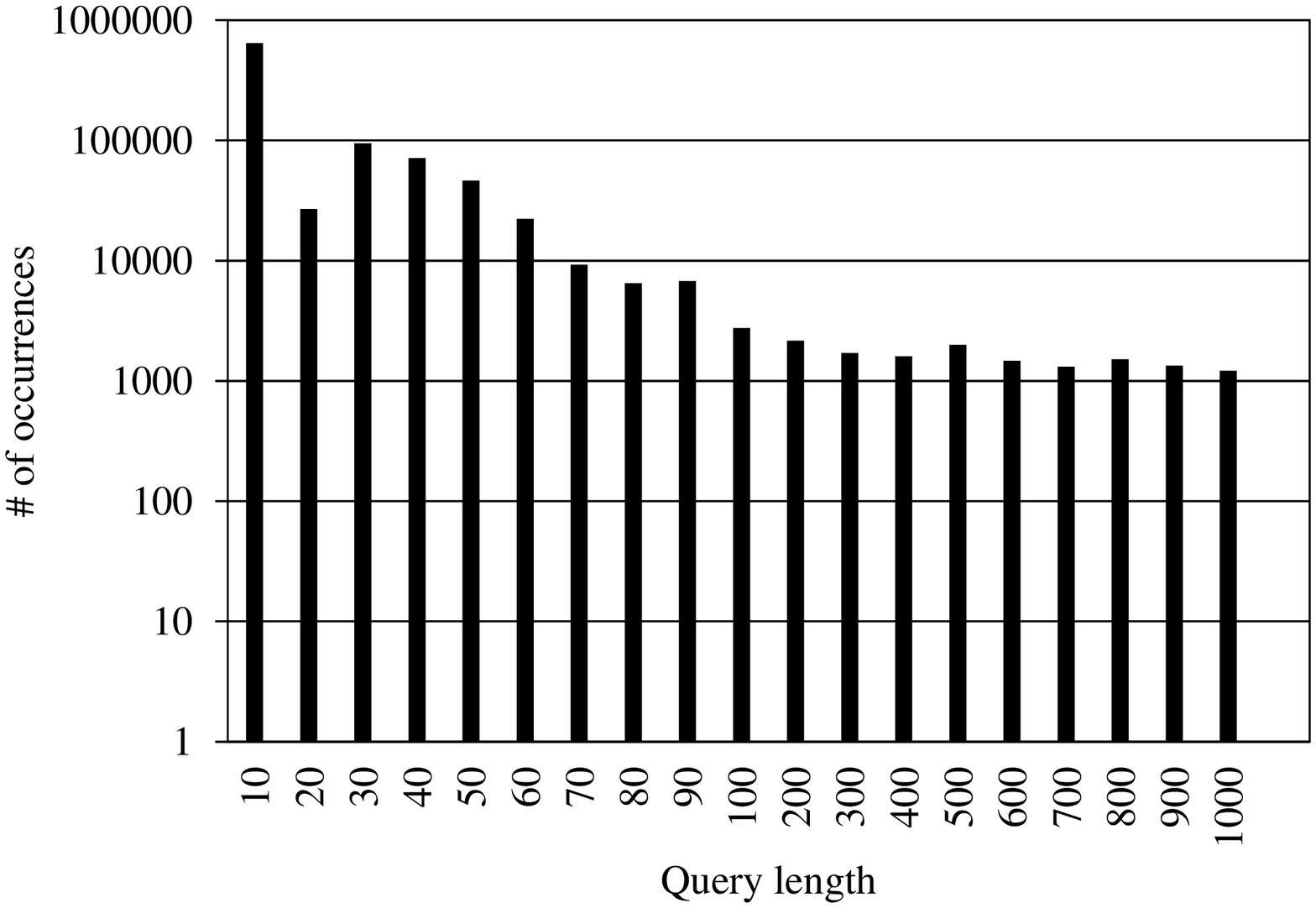} 
\end{tabular}
\end{center}
\vspace{-0.6cm}
\caption{The number of occurrences of query text for dna.200MB (left) and english.200MB (right).}
\label{fig:numocc}
\end{figure}
\begin{figure}[t]
\begin{center}
\begin{tabular}{cc}
dna.200MB & english.200MB \\
\includegraphics[width=0.48\textwidth]{./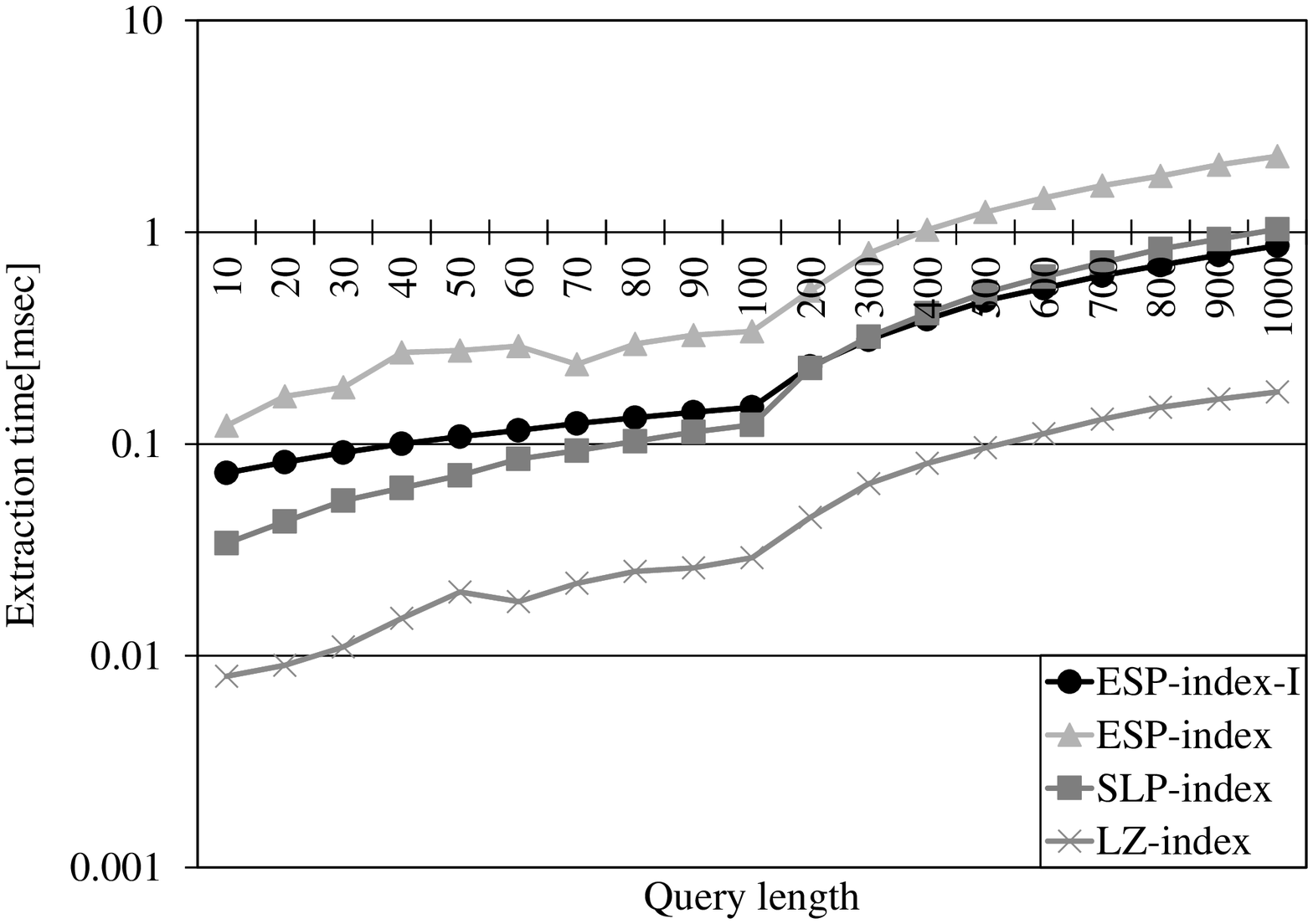} &
\includegraphics[width=0.48\textwidth]{./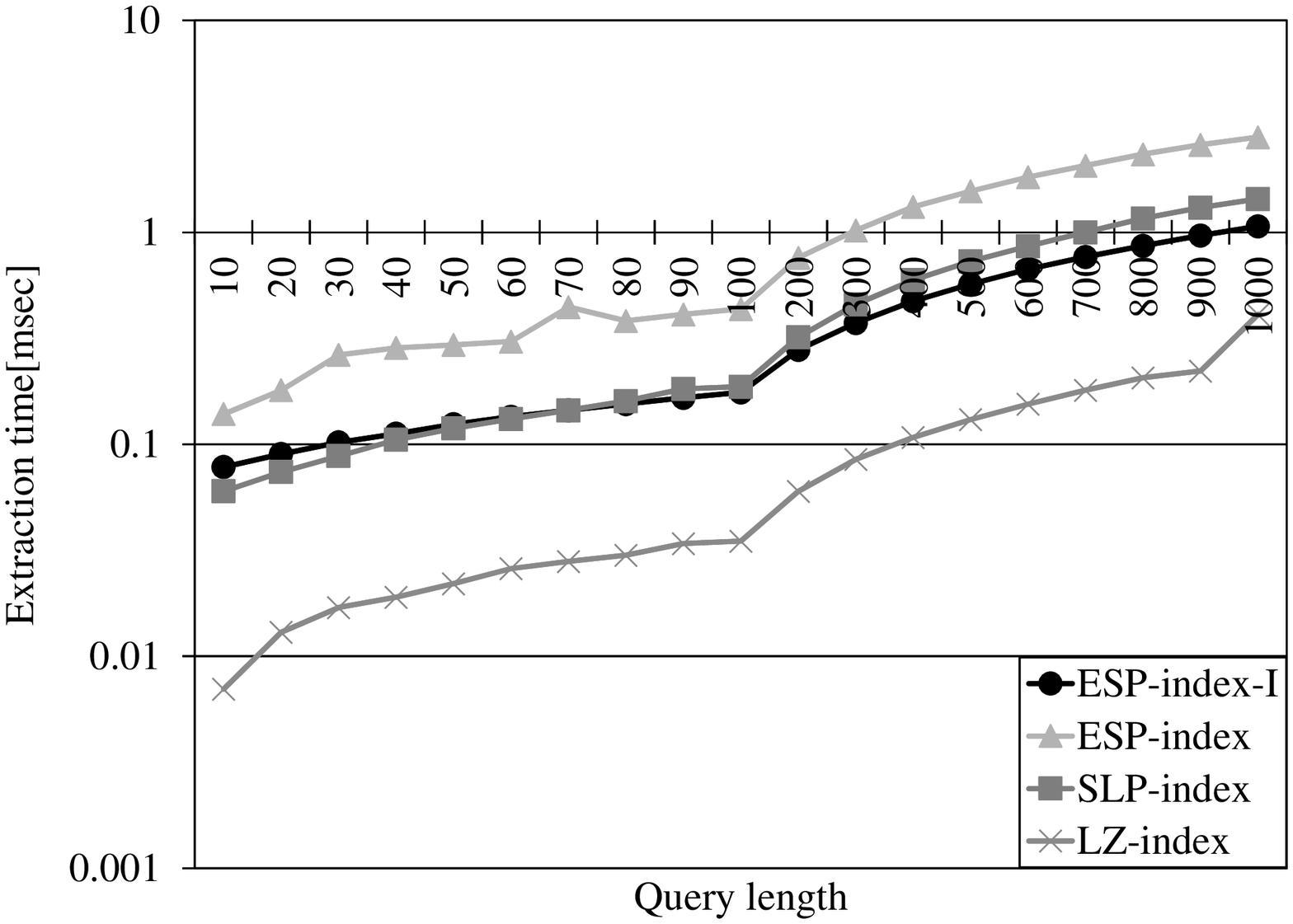} 
\end{tabular}
\end{center}
\vspace{-0.6cm}
\caption{Substring extraction time of each method in milliseconds for dna.200MB (left) and english.200MB (right).}
\label{fig:extract}
\end{figure}

\subsection{Setups}
We evaluated ESP-index-I in comparison to ESP-index, SLP-index, LZ-index on one core of an eight-core Intel Xeon CPU E7-8837 (2.67GHz) machine with 1024 GB memory. 
ESP-index is the previous version of ESP-index-I. 
SLP-index and LZ-index are state-of-the-arts of grammar-based and LZ-based indexes, respectively.
SLP-index used RePair for building SLPs.
We used LZ-index-1 as an implementation of LZ-index.
We also compared ESP-index-I to FM-index, a self-index for general texts, which is downloadable from \url{https://code.google.com/p/fmindex-plus-plus/}.
We implemented ESP-index-I in C++ and the other methods are also implemented in C++. 
We used benchmark texts named dna.200MB of DNA sequences and english.200MB of english texts which
are downloadable from \url{http://pizzachili.dcc.uchile.cl/texts.html}. 
We also used four human genomes\footnote{\url{ftp://ftp.ncbi.nih.gov/genomes/H_sapiens/Assembled_chromosomes/hs_ref_GRC37_chr*.fa.gz}}\footnote{\url{ftp://ftp.ncbi.nih.gov/genomes/H_sapiens/Assembled_chromosomes/hs_alt_HuRef_chr*.fa.gz}}\footnote{\url{ftp://ftp.kobic.kr/pub/KOBIC-KoreanGenome/fasta/chromosome_*.fa.gz}}\footnote{\url{ftp://public.genomics.org.cn/BGI/yanhuang/fa/chr*.fa.gz}} of $12$GB DNA sequences and wikipedia\footnote{http://dumps.wikimedia.org/enwikinews/20131015/enwikinews-20131015-pages-meta-history.xml.bz2} of $7.8$GB XML texts as large-scale repetitive texts.

\subsection{Results on benchmark data}
\begin{table}[t]
\begin{center}
  \caption{Memory in megabytes for dna.200MB and english.200MB.}
  \begin{tabular}{r|c|c|c|c|c}
    & ESP-index-I &  ESP-index & SLP-index & LZ-index & FM-index \\ \hline
dna.200MB      & $156$ & $157$ & $214$ & $208 $ & $325$  \\
english.200MB  & $165$ & $162$ & $209$ & $282$ & $482$ \\
  \end{tabular}
\label{tab:memory}
\end{center}
\end{table}
\begin{table}[t]
\begin{center}
  \caption{Construction time in seconds for dna.200MB and english.200MB.}
  \begin{tabular}{r|c|c|c|c|c}
    & ESP-index-I & ESP-index & SLP-index & LZ-index & FM-index \\ \hline
dna.200MB     &$81.8$  & $82.96$& $1,906.63$ & $64.869$ & $87.7$\\
english.200MB &$93.36$  & $93.58$& $1,906.63$ & $100.624$ & $94.09$\\
  \end{tabular}
\label{tab:consttime}
\end{center}
\end{table}

Figure~\ref{fig:countpattern} and \ref{fig:locatepattern} show the counting and
locating time for query texts consisting of lengths from $10$ to $1,000$ in dna.200MB and english.200MB. 
In addition, the number of occurrences of query texts are presented in Figure~\label{fig:numocc}.
Since the counting and locating time of LZ-index depends quadratically on the query length, counting and locating query texts longer than 200 were slow 
on dna.200MB and english.200MB. 
SLP-index was also slow for locating query texts longer than 200, 
since SLP-index performs as many binary searches as query length for finding the first occurrences of variables. 
ESP-index-I and ESP-index were faster than LZ-index and SLP-index for counting and locating query texts, 
which showed that top-down searches of ESP-index-I and ESP-index for finding occurrences of variables encoding query texts were effective.
ESP-index-I was from $1.4$ to $4.3$ times faster than ESP-index with respect to counting and locating time, which demonstrated our encoding of a parse tree by ESP were effective. 
The counting and locating time of ESP-index-I was compatitive with that of FM-index but ESP-index-I showed a higher compressibility for repetitive texts than FM-index.

Figure~\ref{fig:extract} shows extraction time of subtexts for fixed positions and lengths. 
LZ-index based on LZ77 was fastest among all methods. 
On the otherhand, ESP-index-I was one of the fastest method among grammar-based self-indexes on dna.200MB and english.200MB.

%Figure~\ref{fig:memtime}
Table~\ref{tab:memory} shows memory usage of each method in megabytes for dna.200MB and english.200MB. 
The methods except FM-index archived small memory, which demonstrated their high compressive abilities for texts. 
The memory usage of ESP-index-I was smallest among all methods, and it used 156MB and 165MB for dna.200MB and english.200MB, respectively. 
Since FM-index is a self-index for general texts, the memory usage of FM-index was largest among methods.
Construction time is presented in Table~\ref{tab:consttime}.

\subsection{Results on large-scale repetitive texts}

\begin{table}[t]
\begin{center}
  \caption{Counting, locating, compression and indexing times in seconds, and index and position size in megabytes for ESP-index-I on large-scale texts.}
  \begin{tabular}{|l|c|c|c|c|}
\hline
    & \multicolumn{2}{c|}{\bf genome} &   \multicolumn{2}{c|}{\bf wikipedia} \\ \hline
      $|P|$         & $200$ & $1,000$ & $200$ & $1,000$\\ \hline
	Counting time~(msec) &$1.06$  & $2.29$  & $139.56$    & $13.04$ \\
	Locating time~(msec) &$1.10$  & $2.33$& $167.40$ & $16.69$ \\
%	\#Occurence    &1,009 & 1,007 & 4,387,976 &616,664\\
\hline\hline
        Compression time~(sec)     & \multicolumn{2}{c|}{$4,384$} & \multicolumn{2}{c|}{$2,347$} \\
        Indexing time~(sec) & \multicolumn{2}{c|}{$567$} & \multicolumn{2}{c|}{$74$} \\
        Index size~(MB) & \multicolumn{2}{c|}{$3,888$} & \multicolumn{2}{c|}{$594$} \\
        Position size~(MB) & \multicolumn{2}{c|}{$1,526$} & \multicolumn{2}{c|}{$246$} \\
\hline
  \end{tabular}
\label{tab:long}
\end{center}
\end{table}

\begin{figure}[t]
\begin{center}
\begin{tabular}{cc}
genomes & wikipedia \\
\includegraphics[width=0.48\textwidth]{./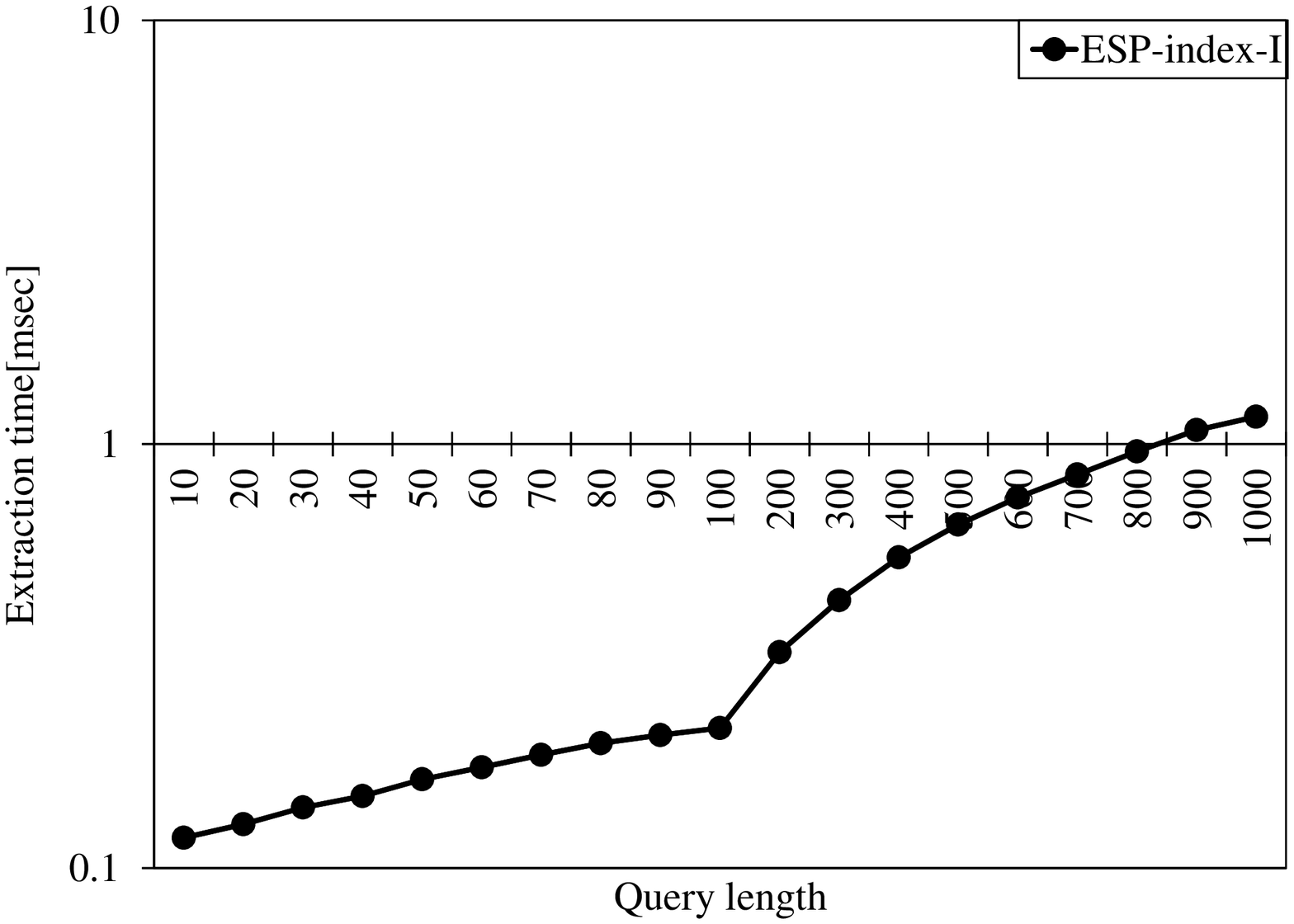} &
\includegraphics[width=0.48\textwidth]{./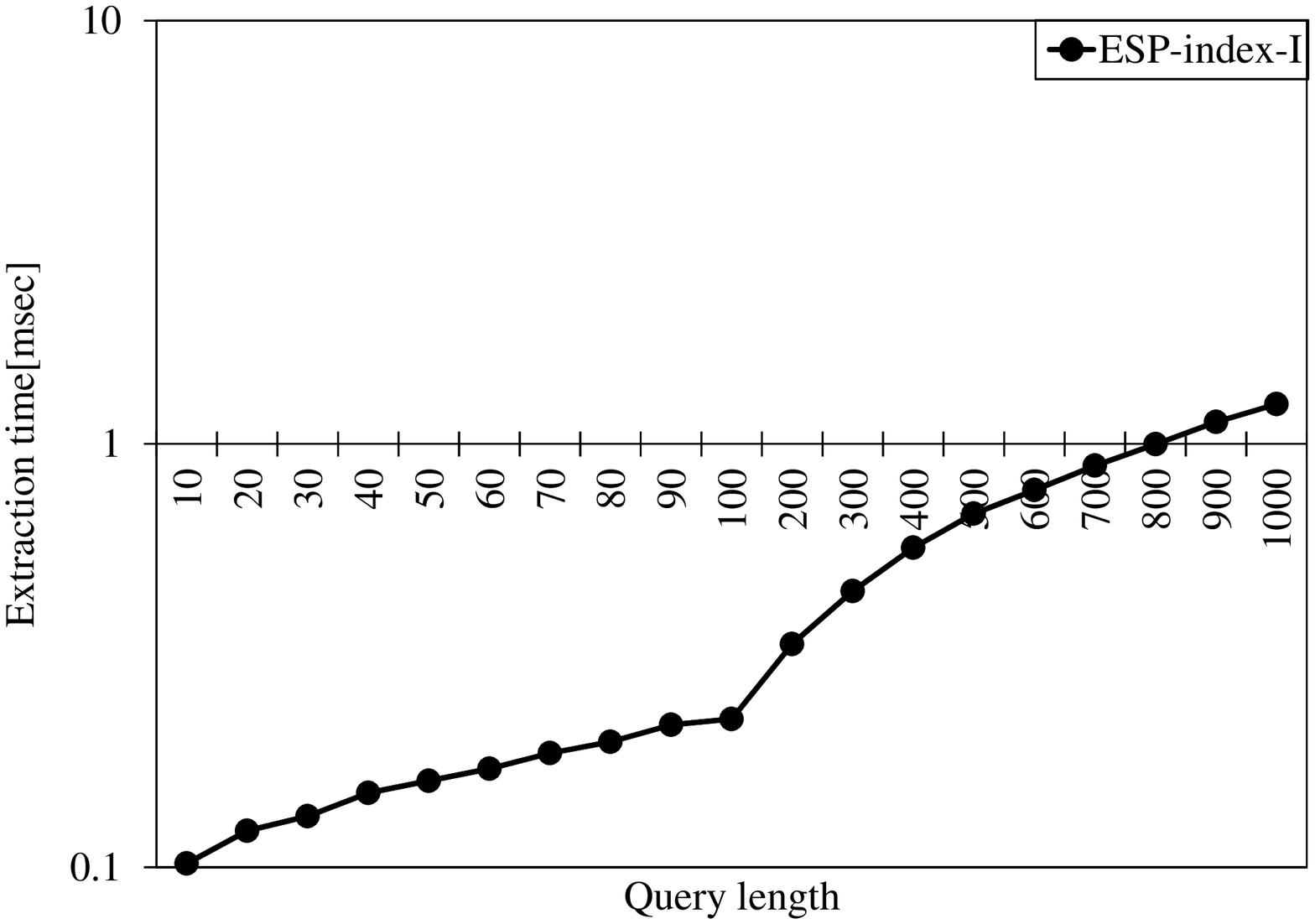} 
\end{tabular}
\end{center}
\vspace{-0.6cm}
\caption{Substring extraction time of four human genomes of $12$GB DNA sequences (left) and wikipedia of $7.8$GB english texts (right).}
\label{fig:longextract}
\end{figure}

We tried ESP-index-I on large-scale repetitive texts. 
The other methods except ESP-index-I did not work on these large texts, because they are applicable to only $32$ bits inputs.
Table~\ref{tab:long} shows the results for counting and locating query texts of lengths of $200$ and $1,000$, construction time and encoding size. 
Figure~\ref{fig:longextract} shows extraction time of substring of lengths from $10$ to $1,000$.
These results showed an applicability of ESP-index-I to real world repetitive texts. 

\section{Conclusion}
We have presented a practical self-index for highly repetitive texts. 
Our method is an improvement of ESP-index
and performs fast query searches by traversing a parse tree encoded by rank/select dictionaries for large alphabets. 
Future work is to develop practical retrieval systems on our self-index. 
This would be beneficial to users for storing and processing large-scale repetitive texts. 

\nocite{Gagie14}

\section{Acknowledgments}
We are thankful to Miguel A. Mart\'inez-Prieto for providing us with a source code of SLP-index. 

{\small
\bibliographystyle{plain}
\bibliography{compress,succinct,cpm,book}

\begin{thebibliography}{10}

\bibitem{Barbay2013}
J.~Barbay and G.~Navarro.
\newblock On compressing permutations and adaptive sorting.
\newblock {\em Theorical Computer Science}, 513:109--123, 2013.

\bibitem{Claude2010}
F.~Claude and G.~Navarro.
\newblock {Self-indexed grammar-based compression}.
\newblock {\em Fundamenta Informaticae}, 111:313--337, 2010.

\bibitem{Claude12}
F.~Claude and G.~Navarro.
\newblock Improved grammar-based compressed indexes.
\newblock In {\em Proceedings of the 19th Symposium on String Processing and
  Information Retrieval}, pages 180--192, 2012.

\bibitem{Cormode07}
G.~Cormode and S.~Muthukrishnan.
\newblock The string edit distance matching problem with moves.
\newblock {\em ACM Transactions on Algorithms}, 3:2:1--2:19, 2007.

\bibitem{Delpratt06}
O.~Delpratt, N.~Rahman, and R.~Raman.
\newblock Engineering the louds succinct tree representation.
\newblock In {\em Proceedings of the 5th International Workshop on Experimental
  Algorithms}, pages 134--145, 2006.

\bibitem{Gagie12}
T.~Gagie, P.~Gawrychowski, J.~K\"{a}rkk\"{a}inen, and Y.~Nekrich.
\newblock A faster grammar-based self-index.
\newblock In {\em Proceedings of the 6th International Conference on Language
  and Automata Theory and Applications}, pages 240--251, 2012.

\bibitem{Gagie14}
T.~Gagie, P.~Gawrychowski, J.~K\"{a}rkk\"{a}inen, Y.Nekrich, and S.~J. Puglisi.
\newblock {LZ}77-based self-indexing with faster pattern matching.
\newblock In {\em Proceedings of the 11th Latin American Theoretical
  Informatics Symposium}, pages 731--742, 2014.

\bibitem{Golynski06}
A.~Golynski, J.~I. Munro, and S.~S. Rao.
\newblock Rank/select operations on large alphabets: a tool for text indexing.
\newblock In {\em Proceedings of the 17th Annual ACM-SIAM Symposium on Discrete
  Algorithms}, pages 368--373, 2006.

\bibitem{Goto13}
K.~Goto, H.~Bannai, S.~Inenaga, and M.~Takeda.
\newblock Fast q-gram mining on {SLP} compressed strings.
\newblock {\em Journal of Discrete Algorithms}, 18:89--99, 2013.

\bibitem{Hermelin09}
D.~Hermelin, G.M. Landau, S.~Landau, and O.~Weimann.
\newblock A unified algorithm for accelerating edit-distance computation via
  text-compression.
\newblock In {\em Proceedings of the 26th International Symposium on
  Theoretical Aspects of Computer Science}, pages 529--540, 2009.

\bibitem{Jacobson89}
G.~Jacobson.
\newblock Space-efficient static trees and graphs.
\newblock In {\em Proceedings of the 30th Annual Symposium on Foundations of
  Computer Science}, pages 549--554, 1989.

\bibitem{Larsson2000}
N.J. Larsson and A.~Moffat.
\newblock Off-line dictionary-based compression.
\newblock In {\em Proceedings of Data Compression Conference}, pages 296--305,
  1999.

\bibitem{ESP}
S.~Maruyama, M.~Nakahara, N.~Kishiue, and H.~Sakamoto.
\newblock {ESP-Index}: A compressed index based on edit-sensitive parsing.
\newblock {\em Journal of Discrete Algorithms}, 18:100--112, 2013.

\bibitem{Munro03}
J.I. Munro, R.~Raman, V.~Raman, and S.S. Rao.
\newblock Succinct representations of permutations.
\newblock In {\em Proceedings of the 13th International Colloquium on Automata,
  Languages and Programming}, pages 345--356, 2003.

\bibitem{Navjda03}
G.~Navarro.
\newblock Indexing text using the ziv-lempel trie.
\newblock {\em Journal of Discrete Algorithms}, 2:87--114, 2004.

\bibitem{Raman07}
R.~Raman, V.~Raman, and S.~S. Rao.
\newblock {Succinct indexable dictionaries with applications to encoding k-ary
  trees, prefix sums and multisets}.
\newblock {\em ACM Transactions on Algorithms}, 3, 2007.

\bibitem{Sakamoto09}
H.~Sakamoto, S.~Maruyama, T.~Kida, and S.~Shimozono.
\newblock A space-saving approximation algorithm for grammar-based compression.
\newblock {\em IEICE Transactions on Information and Systems}, E92-D:158--165,
  2009.

\bibitem{Yamamoto2011}
T.~Yamamoto, H.~Bannai, S.~Inenaga, and M.~Takeda.
\newblock Faster subsequence and don't-care pattern matching on compressed
  texts.
\newblock In {\em Proceedings of the 22nd Annual Symposium on Combinatorial
  Pattern Matching}, pages 309--322, 2011.

\end{thebibliography}
}

\end{document}